\documentclass[format=acmsmall, nonacm=true, screen=true]{acmart}
%
% defining the \BibTeX command - from Oren Patashnik's original BibTeX documentation.
\def\BibTeX{{\rm B\kern-.05em{\sc i\kern-.025em b}\kern-.08emT\kern-.1667em\lower.7ex\hbox{E}\kern-.125emX}}

\usepackage{subcaption}

% Use listings for code snippets
\usepackage{listings}
\lstset{language=C++,
  basicstyle=\ttfamily\footnotesize,
  keywordstyle=\color{blue}\ttfamily,
  stringstyle=\color{red}\ttfamily,
  commentstyle=\color{green!75!black}\ttfamily,
  morecomment=[l][\color{magenta}]{\#}
}

% Use algorithm2e for algorithms
\usepackage[ruled]{algorithm2e}

\SetAlFnt{\small}
\SetAlCapFnt{\small}
\SetAlCapNameFnt{\small}
\SetAlCapHSkip{0pt}
\IncMargin{-\parindent}

% Some math operators
\DeclareMathOperator{\Link}{Lk}
\DeclareMathOperator{\Star}{St}
\DeclareMathOperator{\Closure}{Cl}
\DeclareMathOperator{\Orient}{Orient}
\DeclareMathOperator{\Level}{Lvl}
\DeclareMathOperator{\Tangent}{Tangent}

% Some math macros
\newcommand{\st}{\ \big| \ }
\newcommand{\mc}[1]{\mathcal{#1}}
\newcommand{\abs}[1]{ {\lvert {#1} \lvert} }
\newcommand{\largewedge}{\mbox{\Large $\wedge$}}

% Helpful text macros
\newcommand{\asc}{CASC\xspace}
\newcommand{\gamer}{\texttt{GAMer}\xspace}
\newcommand{\simplex}[1]{$#1$-simplex}
\newcommand{\simplices}[1]{$#1$-simplices}

%%%%%%%%%%%%%%%%%%%%%%%%%%
% Document starts        %
%%%%%%%%%%%%%%%%%%%%%%%%%%
\begin{document}
% Title portion. Note the short title for running heads
\title[Colored Abstract Simplicial Complex]{The Implementation of the Colored Abstract Simplicial Complex and its Application to Mesh Generation}

\author{Christopher T. Lee}
\authornote{Contributed equally to this work}
\orcid{0000-0002-0670-2308}
\email{ctlee@ucsd.edu}

\author{John B. Moody}
\authornotemark[1]

\author{Rommie E. Amaro}
\orcid{0000-0002-9275-9553}
\email{ramaro@ucsd.edu}

\author{J. Andrew McCammon}
\orcid{0000-0003-3065-1456}
\email{jmccammon@ucsd.edu}

\author{Michael J. Holst}
\orcid{}
\email{mholst@ucsd.edu}
\affiliation{%
  \institution{University of California San Diego}
  \city{La Jolla}
  \state{CA}
  \postcode{92037}
  \country{USA}
}
\authorsaddresses{%
Author's addresses: C. T. Lee, R. E. Amaro, and J. A. McCammon, Department of Chemistry and Biochemistry, University of California San Diego; M. J. Holst, Department of Mathematics, University of California San Diego; J. B. Moody (Current address) ViaSat, Inc. Carlsbad-Bldg 10-2063, 6155 El Camino Real, Carlsbad, CA 92009
}

\begin{abstract}
  We introduce \asc: a new, modern, and header-only C++ library which provides a data structure to represent arbitrary dimension abstract simplicial complexes (ASC) with user-defined classes stored directly on the simplices at each dimension.
  This is accomplished by using the latest C++ language features including variadic template parameters introduced in C++11 and automatic function return type deduction from C++14.
  Effectively \asc decouples the representation of the topology from the interactions of user data.
  We present the innovations and design principles of the data structure and related algorithms.
  This includes a metadata aware decimation algorithm which is general for collapsing simplices of any dimension.
  We also present an example application of this library to represent an orientable surface mesh.
\end{abstract}

\keywords{Abstract Simplicial Complexes, Molecular Modeling, Mesh Generation, Mesh Decimation, Variadic Templates, C++ Library}

\maketitle

% The default list of authors is too long for headers}
\renewcommand{\shortauthors}{Lee and Moody, et al.}

\section{Introduction}\label{sec:intro}

  \par For problems in computational topology and geometry, it is often beneficial to use simple building blocks to represent complicated shapes.
  A popular block is the simplex, or the generalization of a triangle in any dimension.
  Due to the ease of manipulation and the coplanar property of triangles, triangulations have become commonplace in fields such as geometric modeling and visualization as well as topological analysis.
  Discretizations are also used for efficient solving of Partial Differential Equations (PDE).
  The use of meshes has become increasingly popular even in the fields of computational biology and medicine\cite{Zhang2016}.

  \par As methods in structural biology improve and new datasets become available, there is interest in integrating experimental and structural data to build new predictive computer models\cite{Roberts2014}.
  A key barrier that modelers face is the generation of multi-scale, computable, geometric models from noisy datasets such as those from Electron Tomography (ET)\cite{Yu2008a}.
  This is typically achieved in at least two steps: (1) segmentation of relevant features, and (2) approximation of the geometry using meshes.
  Subsequently, numerical techniques such as Finite Elements Modeling or Monte Carlo can be used to investigate the transport and localization of molecules of interest.

  \par While many have studied mesh generation in the fields of engineering and animation, few methods are suitable for biological datasets.
  This is largely due to noise introduced by limits in image resolution or contrast.
  Even while using state-of-the-art segmentation algorithms for ET datasets there are often unresolved or missed features.
  Due to these issues, the generated meshes often have holes and other non-manifolds which must be resolved prior to mathematical modeling.
  Another challenge is the interpretation of a voxel valued segmentation.
  The conversion of zig-zag boundaries into a mesh can lead to other problems such as extremely high aspect ratio triangles or, in general, poorly conditioned elements\cite{Yu2008a}.
  To remedy this, various smoothing and decimation algorithms must also be applied prior to simulation.

  \par Previous work by us and others have introduced a meshing tool for biological models, \gamer, for building 3D tetrahedral meshes which obey internal and external constraints, such as matching embedding and/or enclosing molecular surfaces.
  It also provides the ability to use various mesh improvement algorithms for volume and surface meshes\cite{Gao2013,Yu2008}.
  \gamer uses the \texttt{Tetgen} library as the primary tetrahedral volume generator\cite{Si2015a}.
  While the algorithms are sound, the specific implementation is prone to segmentation faults even for simple meshes.
  Careful analysis of the code has identified that the data structures used for the representation of the mesh is primarily at fault.
  This article will focus entirely on the representation of topology in very complex mesh generation codes.
  We note that the algorithms which handle geometric issues like shape regularity and local adaptivity are well understood\cite{Babuska1976,Liu1994}, among others.
  Similarly there is a large body of literature related to local mesh refinement and decimation\cite{Bank1983,Bank1996}.
  Our innovations serve to enable the implementation of these algorithms in the most general and robust way.

  \par \gamer currently employs a neighbor list data structure which tracks the adjacency and orientation of simplices.
  Neighbor lists are quick to construct, however the representation of non-manifolds often leads to code instability.
  Algorithms must check for aberrant cases creating substantial overhead.
  We note that while the need to gracefully represent 2D and 3D non-manifolds for ET applications drove our initial focus, we are also interested in mesh generation in higher dimensions with applications to:
  numerical general relativity (3D+1)\cite{HSTV16a,Regge1961}, computational geometric analysis (nD)\cite{HoTi14a}, phase space simulations (6D), and arbitrary collective variable spaces in molecular modeling for enhanced sampling\cite{Vanden-Eijnden2009d}.
  We therefore chose following requirements for a mesh data structure to serve as design goals:
  \begin{itemize}
    \item General and capable of representing non-manifold, mixed dimensional, oriented and non-oriented meshes in arbitrary dimensions.
    \item Support for inline and flexible data storage. In some applications, data must be associated with the topology. For example, problems in general relativity typically require the storage of metric tensors on all simplex dimensions.
    \item Support for intuitive and simple manipulations and traversals.
  \end{itemize}

  \par Here we describe the development of a scalable \textit{colored abstract simplicial complex} data structure called \asc.
  Simplices are stored as nodes on a Hasse diagram.
  For ease of traversal all adjacency is stored at the node level.
  An additional data object can be stored at each node which is typed according to the simplex dimension at compile time.
  This means that, for example, for a mesh the \simplices{0} can be assigned a vertex type while the \simplices{2} can store some material property instead.
  Typing of each \simplex{k} is achieved using variadic templates introduced in C++11.
  \asc thus provides a natural separation between the combinatorics represented by the ASC from the underlying data types at each simplex dimension and their interactions.
  In \S\ref{sec:background} we briefly define an ASC and some relevant definitions followed by the introduction of the \asc data structure and it's construction in \S\ref{sec:casc}.
  We then demonstrate the use of \asc to represent a surface mesh and compute vertex tangents in \S\ref{sec:applications}.

  \subsection{Related Work}

    \par Although many data structures to represent simplicial complexes have been developed, to the best of our knowledge there currently exists no data structure which supports meshes of arbitrary dimension with user-selected typed data stored directly on each simplex.
    A full review of all existing data structures is beyond the scope of this work,
    however we highlight several representative examples.
    Many data structures such as the half-edge and doubly-connected edge list among others are restricted to the representation of two-manifolds only\cite{DeFloriani2005}.
    Other data structures such as SIG\cite{DeFloriani2004}, IS\cite{DeFloriani2010b}, IA*\cite{Canino2011}, SimplexTree\cite{Boissonnat2012}, AHF\cite{Dyedov2015,Ray2015}, LinearCellComplex, and dD Triangulations\cite{boissonnat:inria-00412437} support or can be extended to represent arbitrary dimensional simplicial complexes.
    However, their current implementations either do not consider the storage of data beyond possibly embedding, or do not support inline storage of user data.
    LinearCellComplex from CGAL supports only a linear geometrical embedding\cite{CGAL}.
    AHF implemented in MOAB uses separate arrays of data which are then referenced using a handle\cite{Dyedov2015,Ray2015}.
    In addition to the limitations of data storage, some make assumptions limiting their generality.
    dD Triangulations, for example, assumes that a simplicial complex is pure and therefore does not support the representation of mixed dimensional complexes\cite{boissonnat:inria-00412437}.

\section{Background -- Abstract Simplicial Complexes}\label{sec:background}
  \par An Abstract Simplicial Complex (ASC) is a combinatorial structure which can be used to represent the connectivity of a simplicial mesh, independent of any geometric information. More formally, the definition of an ASC is as follows.

  \begin{definition}
  Given a vertex set $V$, an \textit{abstract simplicial complex} $\mc{F}$ of $V$ is a set of subsets of $V$ with the following property: for every set $X \in \mc{F}$, every subset $Y \subset X$ is also a member of $\mc{F}$.
  \label{def:asc}
  \end{definition}

  \par The sets $s \in \mc{F}$ are called a simplex or face of $\mc{F}$; similarly a face $X$ is said to be a face of simplex $s$ if $X\subset s$. Since $X$ is a face of $s$, $s$ is a coface of $X$. Each simplex has a dimension characterized by $\dim{s} = \abs{s}-1$, where $\abs{s}$ is the cardinality of set $s$. A simplex of $\dim{s} = k$ is also called a $k$-simplex. The dimension of the complex, $\dim(\mc{F})$, is defined by the largest dimension of any member face. Simplices of the largest dimension, $\dim(\mc{F})$ are referred to as the facets of the complex.

  \par If one simplex is a face of another, they are incident. Every face of a $k$-simplex $s$ with dimension $(k-1)$ is called a boundary face while each incident face with dimension $(k+1)$ is a coboundary face. Two $k$-simplices, $f$ and $s$ are considered adjacent if they share a common boundary face, or coboundary face. The boundary of simplex $s$, $\partial s$, is the sum of the boundary faces.

  \par Having introduced the concept of an ASC, we can also define several operations useful when dealing with ASCs. A subcomplex is a subset that is a simplicial complex itself. The Closure ($\Closure$) of a simplex, $f$, or some set of simplices $F \subseteq \mc{F}$ is the smallest simplicial subcomplex of $\mc{F}$ that contains $F$:
  \begin{equation}
    \Closure(f) = \{s \in \mc{F} \st s \subseteq f\};\qquad \Closure(F) = \bigcup\limits_{f\in F} \Closure(f) \quad\mathrm{(closure)}.
    \label{eq:closure}
  \end{equation}
  It is often useful to consider the local neighborhood of a simplex. The Star ($\Star$) of a simplex $f$ is the set of all simplices that contain $f$:
  \begin{equation}
    \Star(f) = \{s \in \mc{F} \st f \subseteq s\}; \qquad \Star(F) = \bigcup\limits_{f \in F} \Star(f) \quad\mathrm{(star)}.
    \label{eq:star}
  \end{equation}
  The Link ($\Link$) of $f$ consists of all faces of simplices in the closed star of $f$ that do not intersect $f$:
  \begin{equation}
    \Link(f) = \{s \in \Closure \circ \Star(f) \st s\cap f = \emptyset \} = \Closure\circ\Star(f) - \Star\circ\Closure(f) \quad\mathrm{(link)}.
    \label{eq:link}
  \end{equation}

  For some algorithms, it is often useful to iterate over the set of all vertices or edges etc. We use the following notation for the horizontal ``level'' of an abstract simplicial complex.
  \begin{equation}
    \Level_k(\mathcal{F}) = \left\{ s \in \mc{F} \st \dim{s} = k \right\}
  \end{equation}
  A subcomplex which contains all simplices $s \in \mc{F}$ where $\dim(s) \leq k$ is the $k$-skeleton of $\mc{F}$:
  \begin{equation}
    \mc{F}_k = \Closure\circ\Level_k(\mc{F}) = \bigcup\limits_{i\leq k} \Level_i(\mc{F}).
  \end{equation}

  \begin{figure}[ht!]
    \centering
    \includegraphics[width=\textwidth]{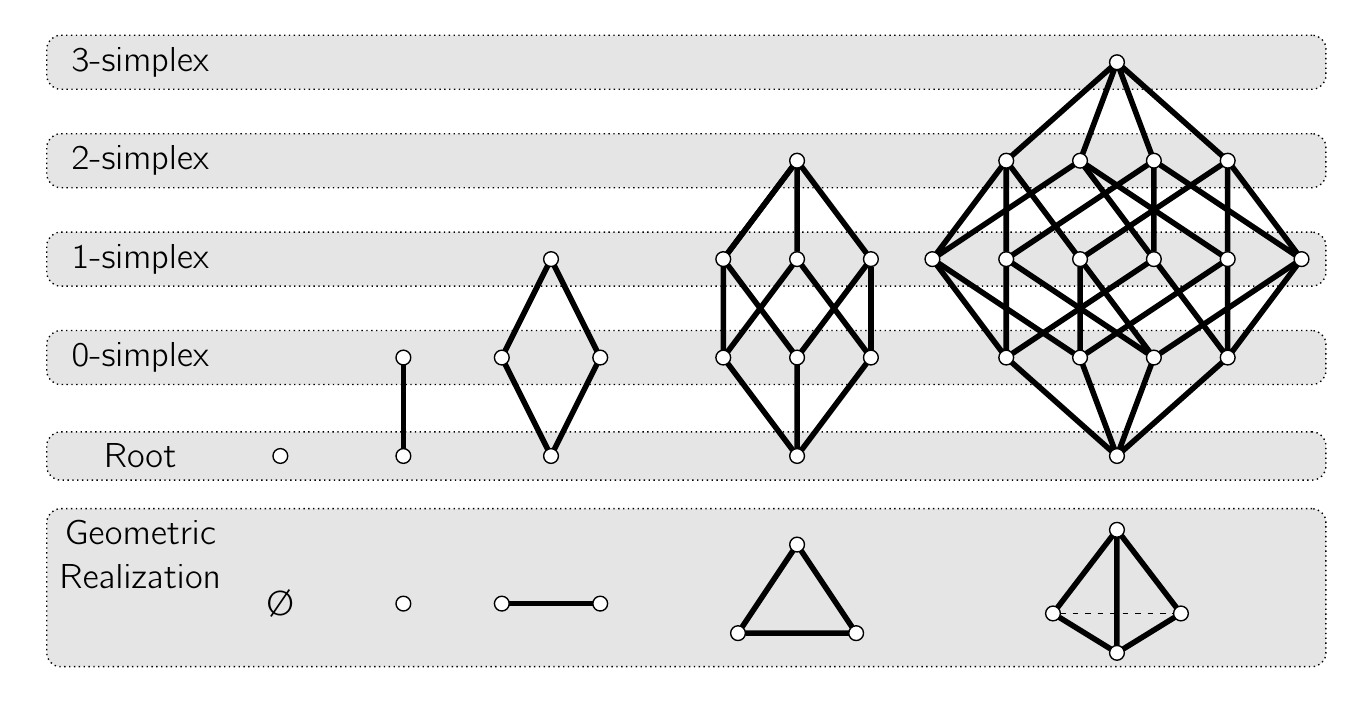}
    \caption{Hasse diagrams of several Abstract Simplicial Complexes and a geometric realization from left to right: the empty set, a vertex, an edge, a triangle, a tetrahedron.}
    \label{fig:hasse}
  \end{figure}

  \par By Definition~\ref{def:asc}, an ASC forms a partially ordered set, or poset.
  Posets are frequently represented by a Hasse diagram, a directed acyclic graph, where nodes represent sets, and edges denote set membership.
  Several example simplicial complexes and their corresponding Hasse diagrams are shown in Fig.~\ref{fig:hasse}.
  Colloquially we will use \textit{up} and \textit{down} to refer to the boundary and coboundary of a simplex respectively.
  In Hasse diagrams, we follow a convention that simplices shown graphically on the same horizontal level have the same simplex dimension.
  Furthermore, simplices of greater dimension are drawn above lesser simplices.

  \begin{figure}[ht!]
    \centering
    \includegraphics[width=\textwidth]{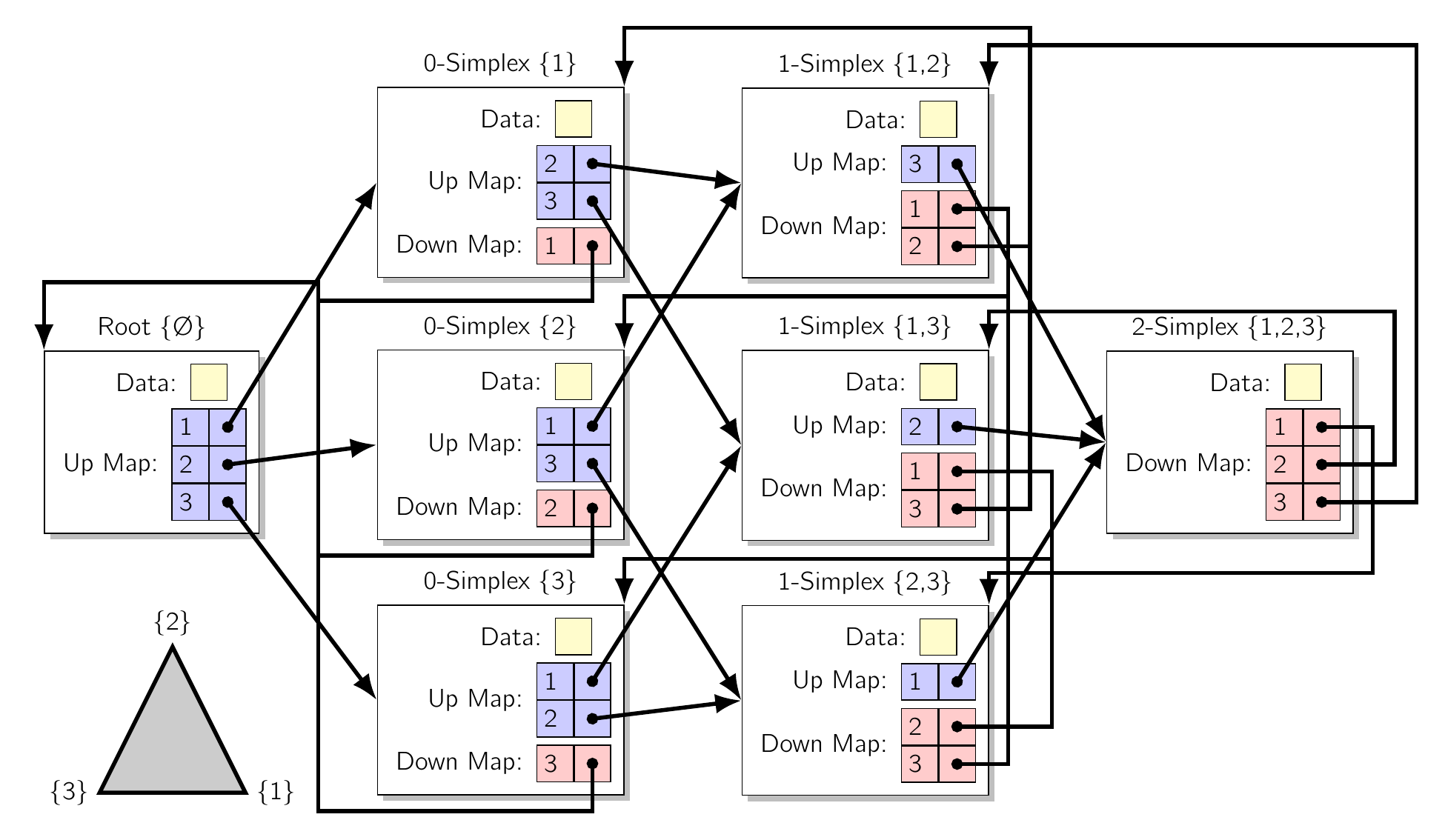}
    \caption{Data structure diagram of a triangle represented using \asc. Each simplex is represented as a node containing a dictionary up and/or down which maps the vertex index to a pointer to the next simplex. Data can be stored at each node with type determined at compile time. Effectively each level can contain different metadata as defined by the user, separating the interactions of user data from the representation of topology.}
    \label{fig:datastruct}
  \end{figure}

\section{Colored Abstract Simplicial Complex}\label{sec:casc}

  \par In this section we introduce the \asc data structure and its implementation.
  For a given simplicial complex, each simplex is represented by a node (\verb|asc_Node|) in the Hasse diagram, and defined by a set of keys corresponding to the vertices which comprise the simplex.
  Note that we use node to refer to objects in \asc Hasse diagram and not \simplices{0}. Instead, \simplex{0} are referred to as the vertices of the mesh.
  Furthermore we refer to the \simplex{\varnothing} or \simplex{-1} as the root simplex interchangeably.
  When a node is instantiated, we assign it a unique Integer Internal Identifier (iID) for use in the development of \asc algorithms.
  The iID is constant and never exposed to the end-user except for debugging purposes.
  Instead nodes can be referenced by the user using the \verb|SimplexID| which acts as a convenience wrapper around an \verb|asc_Node*|, providing additional support for move semantics for fast data access.
  All topological relations (i.e., edges of the Hasse diagram) are stored in each node as a dictionary which maps user specified keys to \verb|SimplexID|s up and down.
  An example data structure diagram of triangle \{1,2,3\} is shown in Fig.~\ref{fig:datastruct}.
  Based upon this example, if a user has the \verb|SimplexID| of \simplex{1} $\{1,2\}$ and wishes to get \simplex{2} $\{1,2,3\}$, they can look in the \verb|Up| dictionary of \verb|SimplexID|$\{1,2\}$ for key $3$ which maps to a \verb|SimplexID|$\{1,2,3\}$.
  The vertices which constitute each simplex are not stored directly, but can be accessed by aggregating all keys in \verb|Down|.

  \par We note that while the representation of all topological relations is redundant and may not be memory optimal, it vastly simplifies the traversals across the complex.
  Furthermore, the associate algorithms and innovations using variable typing are general and thus compatible with other more condensed representations.

  \subsection{Variable Typing Per Simplex Dimension}

    \par We achieve coloring by allowing user-defined data to be stored at each node.
    The typical challenge for strongly typed languages such as C++ is that the types must be defined at compile time.
    Typical implementations would either hard code the type to be stored at each level or use a runtime generic type, such as \verb|void*|.
    However, each of these have drawbacks. For the former, this requires writing a new node data structure for every simplicial complex we may wish to represent.
    For the latter, using \verb|void*| adds an extra pointer dereference which defeats cache locality and may lead to code instability.
    Another possible implementation might be to require users of the library to derive their data types from a common class through inheritance.
    This solution puts an unnecessary burden on users who may have preexisting class libraries, or simply wish to store a built in type, such as an \verb|int|.
    Furthermore, under the inheritance scheme, changes to the underlying container may require users to update their derived classes.
    To avoid this cumbersome step, we have employed the use of variadic templates introduced in C++11 to allow for unpacking and assignment of data types.
    The user specifies the types to be stored at each level in a list of templates to the object constructor, see Fig.~\ref{fig:unpacking}.

    \begin{figure}[ht!]
      \centering
      \includegraphics[width=\textwidth]{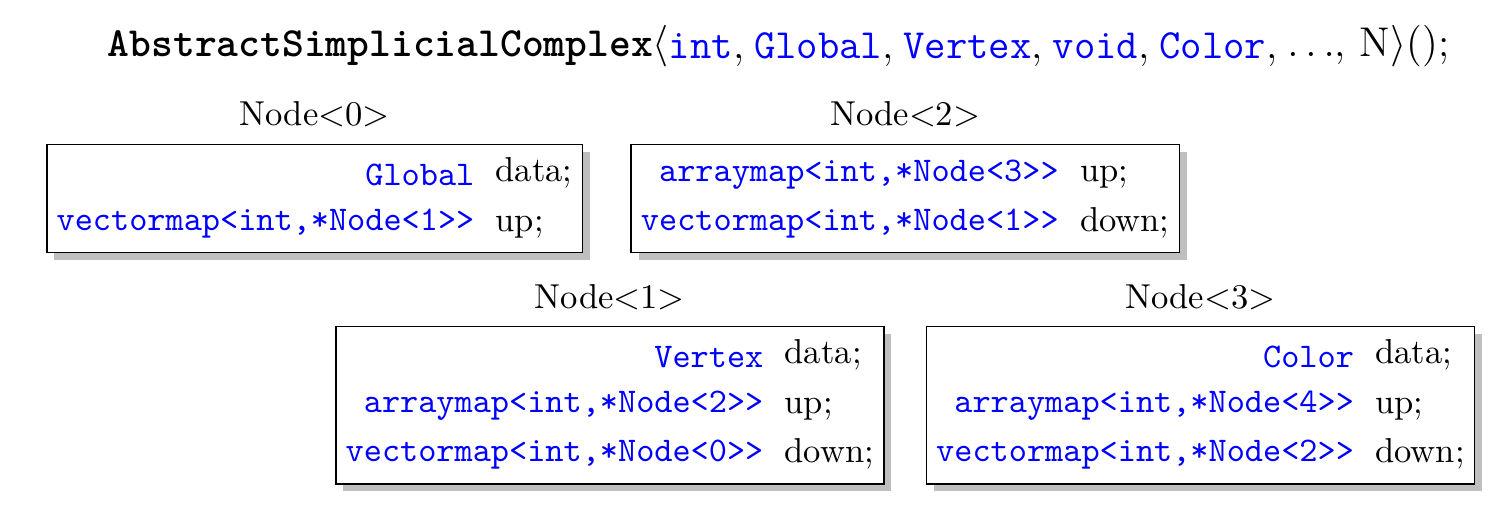}
      \caption{
        Template arguments for ASC are unpacked and assigned to nodes accordingly. The first argument ``\texttt{int}'' is the key type for labeling vertices while the following arguments define node data types. Notably, \texttt{Node<2>} does not allocate memory for data as the corresponding template argument was ``\texttt{void}''. By passing in additional types, simplicial complexes of higher dimensions are instantiated.
      }
      \label{fig:unpacking}
    \end{figure}

    \par The variadic templating allows \asc to represent complexes of any user-defined dimension.
    To specify an $N$-simplicial complex, \asc requires the definition of an index/key type followed by $N+1$ data types and $N$ edge types. The first data type provided after the key type corresponds to data stored on the \simplex{\varnothing} which can be thought of as global metadata.
    For example, suppose we have a 2-simplicial complex intended for visualization and wish to store locations of vertices and colors of faces.
    A suitable ASC can be constructed using the following template command:
\begin{lstlisting}
auto mesh = AbstractSimplicialComplex<int, void, Vertex, void, Color>()
\end{lstlisting}
    If we now wish to represent a tetrahedral mesh, instead of constructing a new data structure, we can simply adjust the command:
\begin{lstlisting}
auto mesh = AbstractSimplicialComplex<int, void, Vertex, void, Color, Density>()
\end{lstlisting}
    In both cases, the first template argument is the key type for referring to vertices followed by the data type for each \simplex{k}. Supposing that the user does not wish to store data on any given level, by passing ``\texttt{void}'' as the template argument, the compiler will optimize the node data type and no memory will be allocated to store data. In both cases, the 0- and \simplices{1} will have no data.

    \par By using variadic templates, we allow the user to specify both the dimension of the simplicial complex as well as the types stored at each level. Because the type deduction is performed at compile time, there is no runtime performance impact on user codes. There is, however, some additional code complexity introduced. If the user wishes to retrieve the data stored in a simplicial complex, they must know what level they are accessing at compile time. A consequence is that the exposed identifier object, \verb|SimplexID|, is templated on the integral level, so that types can be deduced. This does not present a problem for simple use cases, such as:
\begin{lstlisting}
// Create alias for 2-simplicial complex
using ASC = AbstractSimplicialComplex<int,int,int,int>;
ASC mesh = ASC();  // construct the mesh object
mesh.insert<2>({1,2},5); // insert edge {1,2} with data 5
ASC::SimplexID<2> s = mesh.get_simplex<2>({1,2}); // get the edge
// NOTE: the type is templated on the integral level
std::cout << *s << std::endl; // prints data "5"
\end{lstlisting}
    However, when implementing algorithms intended to be generic on any simplicial complex, templated code must be written. We discuss the implementation of several such algorithms in the following section.

\section{Implemented Algorithms}\label{sec:algorithms}

  \par The following algorithms are provided with the \asc library:
  \begin{itemize}
    \item Basic Operations
    \begin{itemize}
      \item Creating and deleting simplices (\verb|insert|/\verb|remove|)
      \item Searching and traversing topological relations (\verb|GetSimplexUp|/\verb|GetSimplexDown|)
    \end{itemize}

    \item Traversals
    \begin{itemize}
      \item By level (\verb|get_level|)
      \item By adjacency (\verb|neighbors_up|/\verb|neighbors_down|)
      \item Traversals across multiple node types (Visitor Design Pattern/double dispatch)
    \end{itemize}

    \item Complex Operations
    \begin{itemize}
      \item Star/Closure/Link
      \item Metadata aware decimation
    \end{itemize}
  \end{itemize}

    %%%% INSERTION ALGORITHM PSEUDOCODE
    \begin{algorithm}[ht!]
      \KwIn{$keys$[$n$]: Indices of $n$ simplices to describe new simplex $s$\\
      $rootSimplex$: The simplex to insert relative to (most commonly $root$)}
      \KwOut{The new simplex $s$}

      \SetStartEndCondition{ (}{)}{)}
      \AlgoDisplayBlockMarkers\AlgoDisplayGroupMarkers\SetAlgoBlockMarkers{ \{}{ \}\ }%
      \SetAlgoNoEnd\SetAlgoNoLine\DontPrintSemicolon
      \SetStartEndCondition{ (}{)}{)}
      \SetKwFor{For}{for}{}{}
      \SetKwIF{If}{ElseIf}{Else}{if}{}{elif}{else}{}

      \SetKwProg{Fn}{Function }{}{}
      \SetKwFunction{insert}{insert}
      \SetKwFunction{create}{createSimplex}

      \Fn(){\insert{keys[n], rootSimplex}}{
        \For(){i = 0; i < n; i++}{
          newSimplex = \create{$rootSimplex \cup keys[i]$}\;
          \eIf(\tcc*[h]{Recurse to insert sub-simplices}){i > 0}{
            \tcc*[l]{Pass only the first part of $keys$ up to index $i$}
            \Return \insert{keys[0:i], newSimplex}
          }
          (\tcc*[h]{Terminal conditional}){
            \Return newSimplex
          }
        }
      }

      \caption{Insertion of a new simplex}
      \label{alg:insertion}
    \end{algorithm}
    %%%%%%%%%%%%%%%%%%%%%%%%%%%%

    \begin{figure}[ht!]
      \centering
      \includegraphics[width=0.7\textwidth]{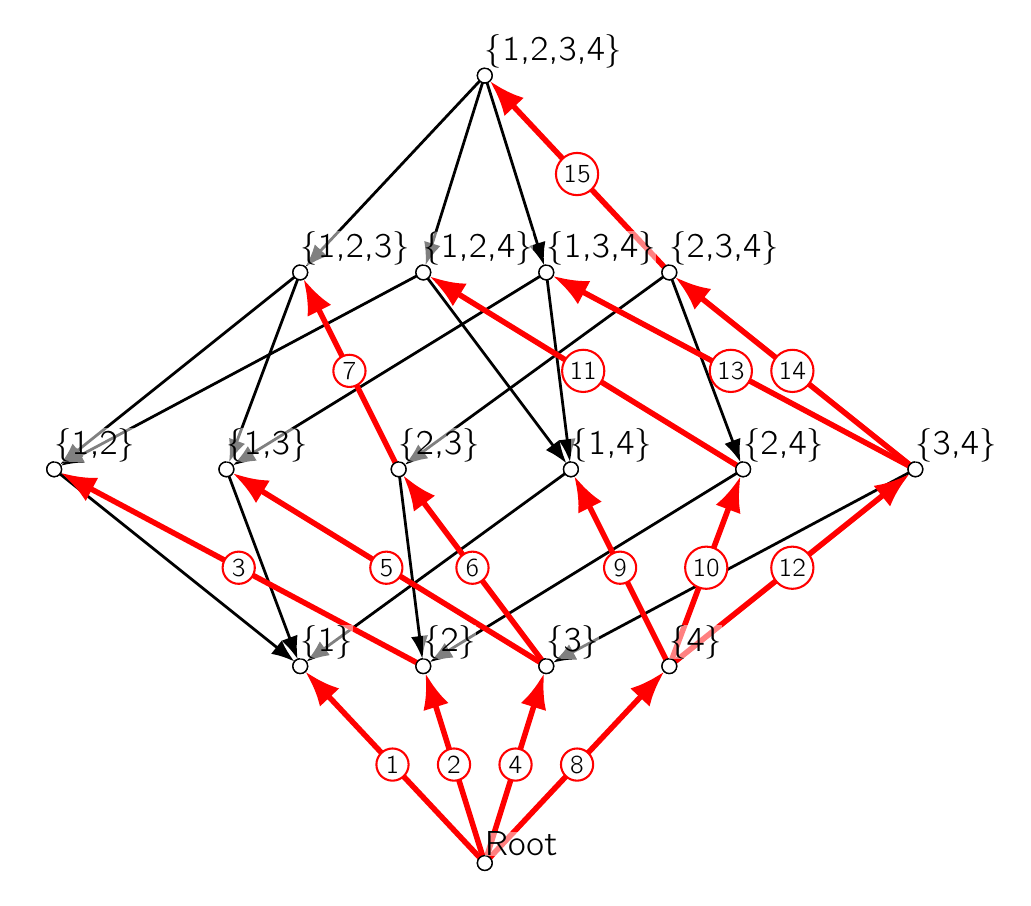}
      \caption{Recursive insertion of tetrahedral simplex \{1,2,3,4\}. The order of node insertions is represented by the numbered red arrows. When each node is created, the black arrows to parent simplices are created by backfilling.}
      \label{fig:insert}
    \end{figure}

  \subsection{Basic Operations}
    \subsubsection{Creating and deleting simplices}

      \par Since the \asc data structure maintains every simplex in the complex and all topological relations, inserting a \simplex{k}, $s$, into the complex means ensuring the existence of, and possibly creating, $\mathcal{O}(2^{k})$ nodes and $\mathcal{O}(k\cdot2^{k-1})$ edges (see derivation in SI).
      Fortunately, the combinatorial nature of simplicial complexes allows this to be performed recursively.
      A generalized recursive insertion operation for any dimensional complex and user specified types, is described in Algorithm~\ref{alg:insertion}.
      The insertion algorithm defines an insertion order such that all dependent simplices exist in the complex prior to the insertion of the next simplex.

      \par As an illustrative example of the template code used in this library, Algorithm~\ref{alg:insertion} is rewritten in C++ template function-like pseudocode shown in Algorithm~\ref{alg:templatedinsertion}. While the templated code is more complicated, it provides many optimizations. For example, since the looping and recursion are performed at compile time, for any \simplex{k} we wish to insert, any modern compiler should optimize the code into a series of \verb|insertNode()| calls; the \verb|setupForLoop()| and \verb|forLoop()| function calls can be completely eliminated. As a result, the optimized templated code will exhibit superior run time performance. To illustrate the insertion operation, a graphical representation of inserting tetrahedron \{1,2,3,4\} by Algorithm~\ref{alg:templatedinsertion} is shown in Fig.~\ref{fig:insert}, and step-by-step in Fig.~S\ref{fig:insertorder}. In the example, new simplex $root\cup v$ is added sequentially to the complex and any missing topological relations are found by traversing the faces of $root$ and backfilling.

      \par The removal of any simplex is also performed using a recursive template function.
      When removing simplex $s$, in order to maintain the property of being a simplicial complex, all cofaces of $s$ or $f \in \Star(s)$ must also be removed along with any boundary/coboundary relations of $f$.
      The implemented removal algorithm traverses up the complex and removes simplex $s$ and all cofaces of $s$ level by level.

    %%%% SEARCH ALGORITHM PSEUDOCODE
    \begin{algorithm}[ht!]
      \KwIn{$root$: the simplex to start the search from.\\
      $keys$[$n$]: the relative name of the desired simplex $s$}
      \KwOut{The simplex $s$}

      \SetStartEndCondition{ (}{)}{)}
      \AlgoDisplayBlockMarkers\AlgoDisplayGroupMarkers\SetAlgoBlockMarkers{ \{}{ \}\ }%
      \SetAlgoNoEnd\SetAlgoNoLine\DontPrintSemicolon
      \SetStartEndCondition{ (}{)}{)}
      \SetKwFor{For}{for}{}{}
      \SetKwIF{If}{ElseIf}{Else}{if}{}{elif}{else}{}

      \SetKwProg{Fn}{Function }{}{}
      \SetKwFunction{gSU}{getSimplexUp}
      \SetKwFunction{gR}{getRecurse}

      \medskip
      \Fn(){\gSU{root, keys[n], index}}{
        \If(){index < n}{
          Find keys[index] in root.up \tcc*[h]{Look for the edge up}\;
          \eIf(){Edge up is found}{
            \Return \gSU{$root\cup keys[index]$, keys, index+1}
          }
          (\tcc*[h]{Edge keys[index] doesn't exist}){
            \Return null pointer
          }
        }
        \lElse(\tcc*[h]{Terminal condition}){
          \Return root
        }
      }

      \caption{Searching for a new simplex}
      \label{alg:search}
    \end{algorithm}
    %%%%%%%%%%%%%%%%%%%%%%%%%%%%

    \subsubsection{Searching and traversing topological relations}

      \par The algorithms for retrieving a simplex as well as for basic traversals from one simplex to another across the data structure are the same.
      Given a starting simplex, and an array of keys up, the new simplex can be found recursively by Algorithm~\ref{alg:search}; The annotated code used is shown in \S\ref{sec:actualneighbor}.
      Traversals from one simplex to another require a key lookup followed by a pointer dereference and therefore occur in approximately constant time ($\mathcal{O}(1)$).
      Since all topological relations are stored, the traversal order across the array of keys does not matter.
      The same algorithm can be applied going down in dimension.
      For the retrieval of an arbitrary simplex, we start the search up from the root node of the complex.

  \subsection{Traversals}

    \par Thus far, we have presented algorithms for the creation of a simplicial complex as well as the basic traversal across faces and cofaces.
    For many applications, other traversals, such as by adjacency, may be more useful.
    We present several built-in traversal algorithms as well as the visitor design pattern for complicated operations.

    \subsubsection{By level}
      \par It is often useful to have a traversal over all simplices of the same level. For example, iterating across all vertices to compute a center of mass. To support this in an efficient fashion, simplices of the same dimension are stored in a level specific map of iIDs to node pointers. Notably, the map for each level is instantiated with the correct user specified node type with respect to level at compile time. To achieve this, we again use variadic templates to generate a tuple of maps, where each tuple element corresponds to the map for a specific level's node type.

      \par Since \verb|asc_node|s are templated on the integral level, we can use a template type map to map an integral sequence to the node pointer type,
      \begin{equation*}
        tuple\langle 1,2,3,\ldots\rangle \xrightarrow{Node\langle k\rangle *} tuple\langle Node\langle 1\rangle *, Node\langle 2\rangle *, Node\langle 3\rangle *, \ldots\rangle,
      \end{equation*}
      producing a tuple of integrally typed node pointers. Subsequently, we can map again to generate a tuple of maps,
      \begin{equation*}
        tuple\langle Node\langle 1\rangle *, Node\langle 2\rangle *,\ldots\rangle \xrightarrow{map<int,T>}
        tuple\langle map\langle int, Node\langle 1\rangle *\rangle, map\langle int, Node\langle 2\rangle *\rangle,\ldots\rangle.
      \end{equation*}
      By using this variadic template mapping strategy we now have the correct typenames assigned. Any level of the tuple can be accessed by getting the integral level using functions in the C++ standard library. Variations of this mapping strategy are also used to construct the \verb|SimplexSet| and \verb|SimplexMap| structures below.

      \par For end users, the implementation details are entirely abstracted away. Continuing from the example above, iteration over all vertices of simplicial complex, \verb|mesh|, can be performed using the provided iterator adaptors as follows.
\begin{lstlisting}[caption={Example use of iterator adaptors for traversal across vertices of mesh.},captionpos=b]
// Deduces the type of nid = ASC::SimplexID<1>
for(auto nid : mesh.get_level_id<1>()){
  std::cout << nid << std::endl;
}
\end{lstlisting}
      The function \verb|get_level_id<k>()| retrieves level $k$ from the tuple and returns an iterable range across the corresponding map.

      %%%% NEIGHBOR ALGORITHM PSEUDOCODE
      \begin{algorithm}[ht!]
        \KwIn{$s$: The simplex to get the neighbors of.}
        \KwOut{List of neighbors}

        \SetStartEndCondition{ (}{)}{)}
        \AlgoDisplayBlockMarkers\AlgoDisplayGroupMarkers\SetAlgoBlockMarkers{ \{}{ \}\ }%
        \SetAlgoNoEnd\SetAlgoNoLine\DontPrintSemicolon
        \SetStartEndCondition{ (}{)}{)}
        \SetKwFor{For}{for}{}{}
        \SetKwIF{If}{ElseIf}{Else}{if}{}{elif}{else}{}

        \SetKwProg{Fn}{Function }{}{}
        \SetKwFunction{gNU}{getNeighborsUp}
        \medskip
        \Fn(\tcc*[h]{Get the neighbors}){\gNU{s}}{
          Create an empty list of $neighbors$\;
          \tcc{Follow all coboundary relations up from s.}
          \For{$a \in s.up$}{
            SimplexID id = $s\cup a$\;
            \For(\tcc*[h]{Go down from id}){$b \in id$}{
              \lIf(\tcc*[h]{Do not add self to neighbors}){$id\setminus b \neq s$}{
                Add $id\setminus b$ to $neighbors$
              }
            }
          }
          \Return neighbors
        }

        \caption{Get the neighbors of a simplex, $s$, by inspecting the faces of $s$}
        \label{alg:neighbor}
      \end{algorithm}
      %%%%%%%%%%%%%%%%%%%%%%%%%%%%

    \subsubsection{By adjacency}
      \par Many geometric algorithms operate on the local neighborhood of a given simplex. Unlike other data structures such as the halfedge, \asc does not store the notion of the next simplex. Instead, adjacency is identified by searching for simplices with shared faces or cofaces in the complex. The algorithm for finding neighbors with shared faces is shown in Algorithm~\ref{alg:neighbor}. We note that the set of simplices with shared faces may be different than the set of simplices with shared cofaces. Both adjacency definitions have been implemented and we leave it to the end user to select the relevant function. Once a neighbor list has been aggregated, it can be traversed using standard methods. While the additional adjacency lookup step is extra in comparison to other data structures, in many cases, the generation of neighbor lists need only be done once and cached. The trade off is that \asc offers facile manipulations of the topology without having to worry about reorganizing neighbor pointers.

    \subsubsection{Traversals over multiple node types.}
      \par When performing more complicated traversals, such as iterating over the star of a simplex, multiple node types may be encountered. In order to avoid typename comparison based branch statements, we have implemented visitor design pattern-based breadth first searches (BFS). The visitor design pattern refers to a double dispatch strategy where a traversal function takes a visitor functor which implements an overloaded \verb|visit()| function. At each node visited, the traversal function will call \verb|visit()| on the current node. Since the functor overloads \verb|visit()| per node type, the compiler can deduce which visit function to call. Example pseudocode is shown in Listing~\ref{lst:visitor}. This double dispatch strategy, eliminates the need for extensive runtime typename comparisons, and enables easy traversals over multiple node types. We provide breadth first traversals up and down the complex from a set of simplices. These visitor traversals are used extensively in the complex operations described below.
\begin{lstlisting}[label={lst:visitor},caption={Example pseudocode of double dispatch to traverse the complex while
scaling the mesh by 2 and coloring the faces green.},captionpos=b]
template <typename Complex>
struct Visitor{
  template <std::size_t k>
  using Simplex = typename Complex::template SimplexID<k>;
  // General template prototype
  template <std::size_t level>
  bool visit(Complex& F, Simplex<level> s){
    return true;
  }
  // Specialization for vertices
  bool visit(Complex& F, Simplex<1> s){
    s.position = s.position*2;
    return true;
  }
  // Specialization for faces
  bool visit(Complex& F, Simplex<3> s){
    s.color = 'green';
    return true;
  }
};

void BFS(ASC& mesh, Visitor&& v){
  // ... traversal code
  v.visit(mesh, currentSimplex);
  // NOTE: visit is overloaded and called based on function prototype.
}

void main(){
  // ... define simplicial complex traits for a surface mesh
  ASC mesh = ASC(); // construct the mesh object
  // ... insert some simplices etc.
  BFS(mesh, Visitor());
  }
}
\end{lstlisting}

  \subsection{Complex Operations}
    \subsubsection{Star/Closure/Link}
      \par The star, link, and closure can be computed using the visitor breadth first traversals to collect simplices. These operations typically produce a set of simplices spanning multiple simplex dimensions, and thus simplex typenames, which cannot be stored in a traditional C++ set. We have implemented a multi-set data structure called the \verb|SimplexSet|, which is effectively a tuple of typed sets corresponding to each level. The \verb|SimplexSet| is constructed using the same mapping strategy as the tuple of maps used for the iteration across levels. For convenience, we provide functions for typical set operations such as insertion, removal, search, union, intersection, and difference. Using a combination of the star and closure functions with \verb|SimplexSet| difference we can get the link by Eq.~\ref{eq:link}.

    \subsubsection{Metadata aware decimation}
      \begin{figure}[ht!]
        \centering
        \begin{subfigure}[ht!]{0.7\textwidth}
          \includegraphics[width=\textwidth]{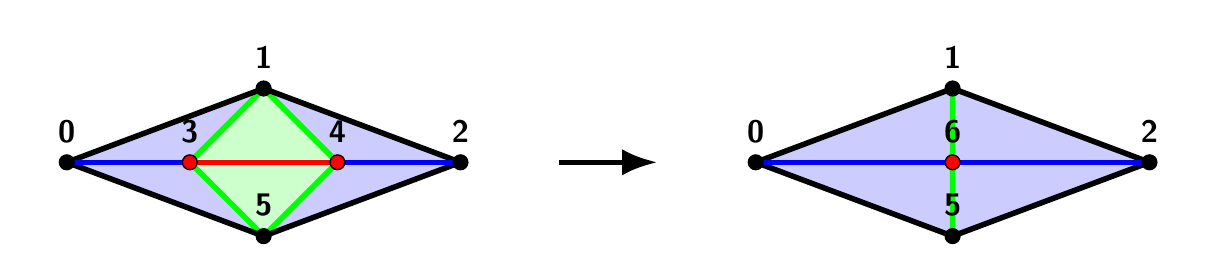}
          \caption{
            A geometric realization of the complex before and after decimation.
          }
          \label{fig:decexample}
        \end{subfigure}\\
        \begin{subfigure}[ht!]{0.7\textwidth}
          \includegraphics[width=\textwidth]{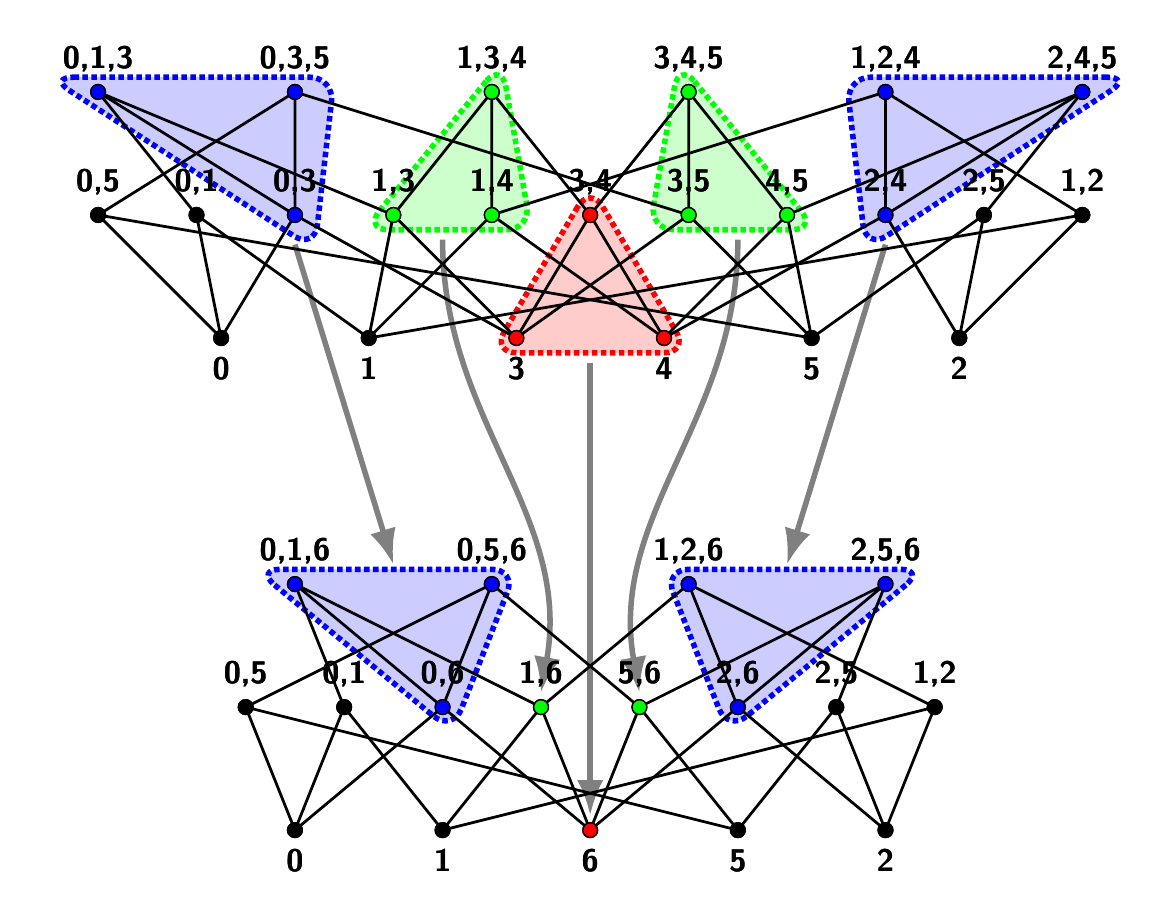}
          \caption{
            Explicitly drawn out Hasse diagrams for the constructed example where the TOP is before, and BOTTOM is after decimation. Grey arrows mark the relationships between sets of simplices before and after. Because there is always a mapping, users can define strategies to manage the stored data.
          }
          \label{fig:decdetail}
        \end{subfigure}
        \caption{The example decimation of edge $s = \{3,4\}$ in a constructed example.}
        \label{fig:decimation}
      \end{figure}

      \par We have implemented a general decimation algorithm which operates by collapsing simplices of higher dimensions into a vertex.
      While edge collapses for 2-manifolds are well studied, a general dimensional collapse is useful for decimating higher-dimensional meshes used to solve PDEs such as those encountered in general relativity.
      Since simplices are being removed from the complex, user data may be lost.
      Our implementation is metadata aware and allows the user to specify what data to keep post decimation.
      This is achieved by using a recursive algorithm to produce a map of removed simplices to new simplices.
      The user can use this mapping to define a function which maps the original stored data to the post decimation topology.
      This decimation strategy is implemented as an inplace operation yielding decimated mesh containing data mapped according to a user specified callback function.

      %%%% DECIMATION ALGORITHM PSEUDOCODE
      \begin{algorithm}[hp]
      \KwIn{$F$: the simplicial complex\;
      $s$: the simplex to decimate\;
      $clbk$: a callback function to handle the data mapping}
      \KwOut{Simplicial complex with simplex collapsed according to Def.~\ref{def:decimation}}

      \AlgoDisplayBlockMarkers\AlgoDisplayGroupMarkers\SetAlgoBlockMarkers{ \{}{ \}\ }%
      \SetAlgoNoEnd\SetAlgoNoLine\DontPrintSemicolon
      \SetStartEndCondition{ (}{)}{)}
      \SetKwFor{For}{for}{}{}
      \SetKwFor{BFSD}{BFSdown}{}{}
      \SetKwFor{BFSU}{BFSup}{}{}
      \SetKwIF{If}{ElseIf}{Else}{if}{}{elif}{else}{}

      \SetKwFunction{pR}{performRemoval}
      \SetKwFunction{pI}{performInsertion}
      \SetKwFunction{insert}{insert}
      \SetKwFunction{append}{append}
      \SetKwFunction{remove}{remove}
      \SetKwFunction{name}{name}

      \medskip
      Simplex np =  F.newVertex() \tcc*[h]{Create a dummy new vertex to map to}\;
      SimplexSet N \tcc*[h]{For the complete neighborhood}\;
      SimplexDataSet data \tcc*[h]{Data structure to store (simplex name, simplex data) pairs}\;
      SimplexMap simplexMap \tcc*[h]{Data structure to store New Simplex -> SimplexSet map}\;
      \medskip
      \tcc*[l]{Get the complete neighborhood (all simplices which are associated with $s$).}
      \For(){vertex $v$ of $s$}{
        \BFSU{simplex $i \in \Star(v)$}{
          \lIf{$j \notin N$}{N.\insert(j)}
      }}
      \medskip
      \tcc*[l]{Backup the complete neighborhood. These simplices will be destroyed eventually.}
      SimplexSet doomed = N\;
      \tcc*[l]{Generate the before-after mapping}
      % MainVisitor
      \BFSD(\tcc*[h]{\textbf{MainVisitor}}){simplex $i \in \Closure(s)$}{
        % InnerVisitor
        \BFSU(\tcc*[h]{\textbf{InnerVisitor}}){simplex $j \in N \cap \Star(i)$}{
          \tcc*[l]{$i$ maps to $np$ so we need to connect $j$ to $np$ instead}
          Name newName = $np \cup j \setminus i$\;
          SimplexSet grab\;

          % GrabVisitor
          \BFSD(\tcc*[h]{\textbf{GrabVisitor}}){simplex $k \in N \cap \Closure(j)$}{
            \tcc*[l]{Grab dependent simplices which
                have not been grabbed yet. Grabbed simplices will map to simplex $newName$}
            N.\remove{k}\;
            grab.\insert{k}
          }
          \lIf(){$newName \notin simplexMap$}{
            simplexMap.\insert{pair(newName, grab)}
          }
          \lElse(){
            simplexMap[newName].\insert{grab}
          }
        }
      }
      \medskip
      \For(){$\{newName, grabbed\} \in simplexMap$}{
        \tcc*[l]{Run the user's callback to map simplex data.}
        DataType mappedData = (*clbk)(F, \name{j}, newName, grabbed)\;
        data.\insert{\{newName, mappedData\}}\tcc*[h]{Insert a pair containing new simplex name and data}
      }
      \tcc*[l]{Iterate over the complete neighborhood and remove simplices}
      \pR{F, doomed}\;
      \tcc*[l]{Iterate over data and append mapped simplices and data}
      \pI{F, data}

      \caption{Decimate a simplex by collapsing it into a vertex.}
      \label{alg:decimationalgo}
      \end{algorithm}
      %%%%%%%%%%%%%%%%%%%%%%%%%%%%

      \par This decimation algorithm is a generalization of an edge collapse operation to arbitrary dimensions. It is formally defined as follows:
      \begin{definition}
        Given simplicial complex $\mc{F}$, simplex to decimate $s \in \mc{F}$, vertex set $V$ of $\mc{F}$, and new vertex $p \notin V$, we define function,
        \begin{equation}
        \varphi(f) =
          \begin{cases}
            f             & \mbox{if } f \cap s = \varnothing\\
            p \cup (f \setminus s)  & \mbox{if } f \cap s \neq \varnothing
          \end{cases},
        \end{equation}
        where $f$ is any simplex $f \in \mc{F}$. We define the decimation of $\mc{F}$ by replacing $s$ with $p$ as $\varphi(\mc{F})$.
        \label{def:decimation}
      \end{definition}
      Note that decimation under this definition is not guaranteed to preserve the topology, as can by seen by decimating any edge or face of a tetrahedron.

      \par Decimation of a simplicial complex must result in a valid triangulation. Here we show that decimation by Definition~\ref{def:decimation} produces a valid abstract simplicial complex.
      \begin{theorem}
        $\varphi(\mc{F})$ is an abstract simplicial complex.
      \end{theorem}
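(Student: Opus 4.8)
The plan is to verify directly the two defining properties of an abstract simplicial complex from Definition~\ref{def:asc}: that $\varphi(\mc{F})$ is a family of subsets of a single fixed vertex set, and that it is closed under taking subsets. For the first property I would take the new vertex set to be $V' = (V \setminus s) \cup \{p\}$ and observe that for every $f \in \mc{F}$ one has $\varphi(f) \subseteq (f \setminus s) \cup \{p\} \subseteq V'$, since $p \notin V$ and $f \subseteq V$; hence every member of $\varphi(\mc{F})$ is a subset of $V'$. This part is pure bookkeeping.

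The substance is downward closure. Fix $X \in \varphi(\mc{F})$, so $X = \varphi(f)$ for some $f \in \mc{F}$, and fix $Y \subseteq X$; the task is to produce $g \in \mc{F}$ with $\varphi(g) = Y$, which then witnesses $Y \in \varphi(\mc{F})$. I would split into cases following the definition of $\varphi$. If $f \cap s = \varnothing$ then $X = f$, so $Y \subseteq f$ gives $Y \in \mc{F}$ by the ASC property of $\mc{F}$, and $Y \cap s = \varnothing$ gives $\varphi(Y) = Y$; take $g = Y$. If $f \cap s \neq \varnothing$ then $X = p \cup (f \setminus s)$, and I would further split on whether $p \in Y$. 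When $p \notin Y$ we have $Y \subseteq f \setminus s \subseteq f$, so again $Y \in \mc{F}$, $Y \cap s = \varnothing$, and $g = Y$ works.

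The one case that requires an idea is $p \in Y$ with $f \cap s \neq \varnothing$. Write $Y = p \cup Y'$ with $Y' = Y \setminus \{p\} \subseteq f \setminus s$. The hard part is exhibiting a preimage of $Y$ in $\mc{F}$, and the trick is to pick any vertex $v \in f \cap s$ (nonempty by assumption) and set $g = Y' \cup \{v\}$: then $g \subseteq f$, so $g \in \mc{F}$ by downward closure of $\mc{F}$, and $g \cap s \ni v$, so $\varphi(g) = p \cup (g \setminus s) = p \cup Y' = Y$, using $Y' \cap s = \varnothing$ and $v \in s$. The degenerate case $Y = \varnothing$ (forcing $X = \varnothing$, handled since $\varphi(\varnothing) = \varnothing$ and $\varnothing \in \mc{F}$) is already covered by the cases above. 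I expect this last case to be the only place where one does more than unwind definitions, and the key point is simply that closure of $\mc{F}$ under subsets lets us reattach a vertex of $s$ to recover a simplex of $\mc{F}$ that $\varphi$ sends to the prescribed subset.
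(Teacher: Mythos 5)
Your proposal is correct and follows essentially the same route as the paper: a case split on whether the subset contains the new vertex $p$, with the only nontrivial case resolved by reattaching part of $s$ to the $p$-free portion and invoking downward closure of $\mc{F}$ to exhibit a preimage. The sole difference is cosmetic --- you reattach a single vertex $v \in f \cap s$ where the paper reattaches the whole set $q = f \cap s$ --- and your explicit check that $\varphi(\mc{F})$ consists of subsets of a fixed vertex set is a small addition the paper leaves implicit.
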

      \begin{proof}
        Given simplices $y$ and $x$, let $y \in \varphi(\mc{F})$ and let $x \subset y$. We will show that $x \in \varphi(\mc{F})$.
        \smallskip
        There are two cases for $y \cap p$, as they can be disjoint or intersecting.

        Considering the disjoint case where $y \cap p = \varnothing$.
        This implies that $y \in \mc{F}$ and $y \cap s = \varnothing$.
        Since $\mc{F}$ is a simplicial complex, $y \in \mc{F}$ implies that $x \in \mc{F}$.
        Furthermore since $y \cap s = \varnothing$ and $x \subset y$, then $x \cap s = \varnothing$ implying that $\varphi(x) = x$ and thus $x \in \varphi(\mc{F})$.

        Alternately in the intersecting case where $y \cap p \neq \varnothing$, that is $y \cap p = p$.
        Since $x \subset y$ there are two sub-cases where $x$ either contains $p$ or does not.

        Supposing that $x \cap p = \varnothing$.
        Then $x \subset (f \setminus s)$ implying that $x \in \mc{F}$ and $x \cap s = \varnothing$.
        Therefore by the disjoint case, $\varphi(x) = x$ implies that $x \in \varphi(\mc{F})$.

        Supposing that $x \cap p = p$.
        We can rewrite any such simplex $x$ as $x = w \cup p$ where $w \cap p = \varnothing$.
        Furthermore we can write that $y = x \cup r$ where $x \cap r = \varnothing$ and $r\cap p = \varnothing$ and thus $y = w\cup p \cup r$.
        There exists some set $q$ such that $f = w \cup r \cup q$ and $q \subseteq s$ such that $\varphi(f) = p\cup ((w\cup r \cup q)\setminus s) = y$
        Since $f\in \mc{F}$ then $w\cup q \in \mc{F}$.
        Therefore $\varphi(w\cup q) = p\cup((w\cup q)\setminus s) = p \cup w = x$ and thus $x \in \varphi(\mc{F})$.

        For all cases and sub-cases we have shown that $x \in \varphi(\mc{F})$ therefore $\varphi(\mc{F})$ is an abstract simplicial complex.
      \end{proof}

    \begin{table}[ht!]
      \caption{Traversal order of the visitors for the decimation shown in Figure~\ref{fig:decexample}.}
       \begin{tabular}{@{}ccccc@{}}
                \toprule
                Order   & \verb|MainVisitor|& \verb|InnerVisitor|   & \verb|GrabVisitor|            & Maps to   \\ \midrule
                1       & \{3,4\}           & \{3,4\}               & \{3,4\}, \{3\}, \{4\}         & \{6\}     \\
                2       & \{3,4\}           & \{1,3,4\}             & \{1,3,4\}, \{1,3\}, \{1,4\}   & \{1,6\}   \\
                3       & \{3,4\}           & \{3,4,5\}             & \{3,4,5\}, \{3,5\}, \{4,5\}   & \{3,6\}   \\
                4       & \{3\}             & \{0,3\}               & \{0,3\}                       & \{0,6\}   \\
                5       & \{3\}             & \{0,1,3\}             & \{0,1,3\}                     & \{0,1,6\} \\
                6       & \{3\}             & \{0,3,5\}             & \{0,3,5\}                     & \{0,5,6\} \\
                7       & \{4\}             & \{2,4\}               & \{3,5\}                       & \{5,6\}   \\
                8       & \{4\}             & \{1,2,4\}             & \{1,2,4\}                     & \{1,2,6\} \\
                9       & \{4\}             & \{2,4,5\}             & \{2,4,5\}                     & \{2,5,6\} \\
            \bottomrule
            \end{tabular}
      \label{tab:decimateorder}
    \end{table}

    \par A pseudocode implementation for this decimation is provided in Algorithm~\ref{alg:decimationalgo}.
    Given some simplicial complex $\mc{F}$ and simplex $s \in \mc{F}$ to decimate, this algorithm works in four steps.
    First, we compute the complete neighborhood, $nbhd = \Star(\Closure(s))$, of $s$. Simplices not in the complete neighborhood will be invariant under $\varphi$ and are ignored.
    Next, we use a nested set of breadth first searches to walk over the complete neighborhood and compute $p \cup f \setminus s$ for each simplex in the neighborhood.
    The results are inserted into a \verb|SimplexMap| which maps $\varphi(f)$ to a \verb|SimplexSet| of all $f$ which map to $\varphi(f)$.
    Third, we iterate over the \verb|SimplexMap| and run the user defined callback on each mapping to generate a list of new simplices and associated mapped data stored in \verb|SimplexDataSet|.
    Finally, the algorithm removes all simplices in the complete neighborhood and inserts the new mapped simplices.

    \par An example application of this decimation operation is shown in Figure~\ref{fig:decimation}.
    In Figure~\ref{fig:decexample} we show the geometric realization of the complex before and after the decimation of simplex \{3,4\}.
    In Figure~\ref{fig:decdetail} we show the detailed Hasse diagrams for the constructed example. Note that there are two possible mapping situations.
    In one case, $f \in \Star(\Closure(s)) \cap \Closure(\Star(s))$, groups of simplices are merged.
    In the other case, simplices $f \in \Closure(\Star(s)) \setminus \Star(\Closure(s))$ only need to be reconnected to the new merged simplices.
    By carefully choosing the traversal order, some optimizations can be made.

    \par We apply the decimation on the constructed example shown in Figure~\ref{fig:decimation} and show the order of operations with respect to the current visited simplex for each visitor function in Table~\ref{tab:decimateorder}.
    Starting out, \verb|MainVisitor| and \verb|InnerVisitor| will visit \{3,4\}.
    At this point, \verb|GrabVisitor| will search BFS down from \{3,4\} to grab the set \{\{3,4\}, \{3\}, \{4\}\} and remove it from the neighborhood, eliminating some future calculations at \{3\} and \{4\}.
    All simplices in this set will map to new simplex \{6\} after decimation.
    Continuing upwards, \verb|InnerVisitor| will find simplex \{1,3,4\} and \verb|GrabVisitor| will grab set \{\{1,3,4\}, \{1,3\}, \{1,4\}\}.
    Again this set of simplices will map to common simplex \{1,6\} post decimation.
    A similar case occurs with simplex \{3,4,5\}. At this point, all simplices in $\Star(\Closure(s)) \cap \Closure(\Star(s))$ have been visited and removed from the neighborhood and \verb|MainVisitor| continues BFS down and finds \{3\} and calls BFS up (\verb|InnerVisitor|).
    Note that since simplex \{3\} has already been grabbed, \verb|InnerVisitor| will continue upwards and find \{0,3\}.
    Looking down there are no simplices which are faces of \{0,3\} in the neighborhood. So on and so forth.

    \par To reiterate, \verb|GrabVisitor| grabs the set of simplices which will be mapped to a common simplex.
    We show here that the order in which simplices are grabbed by Algorithm~\ref{alg:decimationalgo} will preserve that all simplices $f = w\cup q$ where $q\subseteq s$ will map to $\varphi(f) = w\cup p$.

    \par When visiting any simplex $f\supsetneq q$ where $q\subseteq s$ and $q$ corresponds to simplices visited by \verb|MainVisitor|.
    We can write $f$ as $f = w\cup q$ where the sets $w$ and $q$ are disjoint.
    Looking down from $f$ all simplices fall into two cases: $g = v\cup q$ where $\varnothing \subseteq v\subsetneq w$ or $h = w\cup t$ where $t\subsetneq q$.
    All simplices of form $g$, at worst case, will been grabbed while \verb|InnerVisitor| proceeded BFS up from $q$.
    Remaining simplices $h$ can be grouped with $f$ and correctly mapped to $w\cup p$.

    \par We note that in some non-manifold cases \verb|GrabVisitor| will not always grab set members in one visit.
    Supposing that we removed simplex \{1,3,4\} from the constructed example, in this case, \verb|InnerVisitor| cannot visit \{1,3,4\} and simplices \{1,3\} and \{1,4\} will not be grouped.
    Instead \{1,3\} and \{1,4\} will be found individually when \verb|MainVisitor| visits \{3\} then \{4\}.
    To catch this case and correctly map \{1,3\} and \{1,4\} to \{1,6\}, we use a \verb|SimplexMap| to aggregate all maps prior to proceeding.
    We note that in all cases starting with a valid simplicial complex, this implementation of the general collapse of simplex $s$ visits each member in $\Star(\Closure(s))$ and maps according to Def.~\ref{def:decimation} producing a valid simplicial complex.
    There is no guarantee that the result will have the same topological type as the pre-decimated mesh.
    The preservation of the topological type under decimation is often a desirable trait.
    We will show how to verify the Link Condition for when edge collapse of a 2-manifold will preserve the topological type in \S\ref{sec:preservingtopology}.

\section{Surface Mesh Application Example}\label{sec:applications}

\par \asc is a general simplicial complex data structure which is suitable for use in mesh manipulation and processing. For example, we can use \asc as the underlying representation for an orientable surface mesh. Using a predefined \verb|Vertex| class which is wrapped around a tensor library, and a class \verb|Orientable| which wraps an integer, we can easily create a surface mesh embedded in $\mathbb{R}^3$.
\begin{lstlisting}
using Vector = tensor<double,3,1>;
struct Vertex {
  Vector position;
  // ... other helpful vertex functions;
};
struct Orientable {
  int orientation;
};
struct complex_traits
{
    using KeyType = int;
    using NodeTypes = util::type_holder<void,Vertex,void,Orientable>;
    using EdgeTypes = util::type_holder<Orientable,Orientable,Orientable>;
};
using SurfaceMesh = simplicial_complex<complex_traits>;
\end{lstlisting}
In this case, \simplices{1} will store a \verb|Vertex| type while faces and all edges will store \verb|Orientable|. Using \verb|SurfaceMesh| we can easily create functions to load or write common mesh filetypes such as OFF as shown in the included library examples.

%%%% ORIENTATION ALGORITHM PSEUDOCODE
\begin{algorithm}[ht!]
\KwIn{Simplices $a$ and $b$ where $b = a\cup v$ and $v$ is a vertex.}
\KwOut{The orientation of edge $a\rightarrow b$}

\SetStartEndCondition{ (}{)}{)}
\AlgoDisplayBlockMarkers\AlgoDisplayGroupMarkers\SetAlgoBlockMarkers{ \{}{ \}\ }%
\SetAlgoNoEnd\SetAlgoNoLine\DontPrintSemicolon
\SetStartEndCondition{ (}{)}{)}
\SetKwFor{For}{for}{}{}
\SetKwIF{If}{ElseIf}{Else}{if}{}{elif}{else}{}

\medskip
int orient = 1\;
\For(\tcc*[h]{For each vertex $u$ in simplex $a$}){Vertex u : $a$}{
  \lIf(){v>a}{
    orient *= -1;
  }
}
\Return orient;

\caption{Define the orientation of a topological relation.}
\label{alg:orient}
\end{algorithm}
%%%%%%%%%%%%%%%%%%%%%%%%%%%%

\par We can define a boundary morphism which applies on an ordered \simplex{k},
\begin{equation}
  \partial_i^k([a_0,\ldots, a_{k-1}]) = (-1)^{i}([a_0,\ldots, a_{k-1}]\setminus \{a_i\}),
\end{equation}
where $a_i < a_{i+1}$. Using Algorithm~\ref{alg:orient}, we can apply this morphism to assign a $\pm 1$ orientation to each topological relation in the complex. Subsequently, for orientable manifolds, we can compute orientations of faces $f_1$ and $f_2$ which share edge $e$ such that,
\begin{equation}
  \Orient(e_1)\cdot\Orient(f_1) + \Orient(e_2)\cdot\Orient(f_2) = 0,
\end{equation}
where $e_1$ and $e_2$ correspond to the edge up from $e$ to $f_1$ and $f_2$ respectively. Doing so, we create an oriented simplicial complex.

\par Supposing that we wish to compute the tangent of a vertex as defined by the weighted average tangent of incident faces. This is equivalent to computing the oriented wedge products of each incident face. This can be written generally as,
\begin{align}
  \Tangent(v) & = \frac{1}{N}\sum_{i=0}^{N} \Orient(f_i)\cdot(\partial_{j}(f_i)\largewedge \partial_{k}(f_i))\\
  & = \frac{1}{N}\sum_{i=0}^{N} \Orient(f_i)\cdot \frac{1}{2}(e_{i,j}\otimes e_{i,k} - e_{i,k}\otimes e_{i,j}),
\end{align}
where $N$ is the number of incident faces, $f_i$ is incident face $i$, $j$ and $k$ are indices of vertex members of $f_i$ not equal to $v$, and $e_{i,j} = \partial_{j}(f_i)$. This can be easily computed using a templated recursive function.
\begin{lstlisting}
// Terminal case
auto getTangentH(const SurfaceMesh& mesh,
    const Vector& origin,
    SurfaceMesh::SimplexID<SurfaceMesh::topLevel> curr){
    return (*curr).orientation;
}

template <std::size_t level, std::size_t dimension>
auto getTangentH(const SurfaceMesh& mesh,
    const Vector& origin,
    SurfaceMesh::SimplexID<level> curr){
    tensor<double, 3, SurfaceMesh::topLevel - level> rval;
    auto cover = mesh.get_cover(curr); // Lookup coboundary relations
    for(auto v : cover){
        auto edge = *mesh.get_edge_up(curr, v); // Get the edge object
        const auto& vtx = (*mesh.get_simplex_up({v})).position; // Vertex v
        auto next = mesh.get_simplex_up(curr,v); // Simplex curr union v
        rval += edge.orientation * (v-origin) * getTangentH(mesh, origin, next);
    }
    return rval/cover.size();
}
\end{lstlisting}
This demonstrates the ease using the \asc library as an underlying simplicial complex representation. Using the provided API, it is easy to traverse the complex to perform any computations.

\subsection{Preservation of Topology Type of Surface Mesh Under Edge Decimation by Contraction}\label{sec:preservingtopology}
  \par Supposing we wish to decimate a surface mesh by edge contraction under Def.~\ref{def:decimation} without changing the topology of the complex.
  This can be verified by checking the Link Condition, defined with proof from Edelsbrunner\cite{Edelsbrunner2001}, stating,
  \begin{lemma}
    Let $F$ be a triangulation of a 2-manifold. The contraction of $ab\in F$ preserves the topological type if and only if $\Link(a)\cap\Link(b) = \Link(ab)$.
    \label{lemma:linkcondition}
  \end{lemma}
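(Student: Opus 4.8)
The plan is to localize the problem and reduce everything to the combinatorics of vertex links, using the two defining features of a triangulation $F$ of a closed $2$-manifold: every vertex link $\Link(v)$ is a $1$-sphere (a cycle), and every edge link is a $0$-sphere (a pair of vertices, namely the two apex vertices of the two triangles containing the edge). Write $p$ for the new vertex and let $\varphi$ be the contraction of Definition~\ref{def:decimation} with $s = ab$. Since $\varphi$ is the identity on every simplex disjoint from $\{a,b\}$, the complexes $F$ and $\varphi(F)$ can differ only on the closed stars of $a$ and $b$, so it suffices to understand $\Closure(\Star(a))\cup\Closure(\Star(b))$ and, especially, $\Link_{\varphi(F)}(p)$. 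Also note that $\Link(ab) \subseteq \Link(a)\cap\Link(b)$ always holds (a coface $ab\sigma$ has $a\sigma$ and $b\sigma$ as faces), so the link condition is equivalent to the reverse inclusion $\Link(a)\cap\Link(b)\subseteq\Link(ab)$.

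The central computation is the identity $\Link_{\varphi(F)}(p) = \Lambda$, where $\Lambda$ is obtained from the cycles $\Link(a)$ and $\Link(b)$ by deleting $b$ from the first, deleting $a$ from the second, and gluing the two resulting arcs along their shared vertices; concretely, an edge $uv$ lies in $\Link_{\varphi(F)}(p)$ iff $auv\in F$ or $buv\in F$. In the cycle $\Link(a)$ the neighbours of $b$ are exactly the two vertices $x,y$ of $\Link(ab)$, so deleting $b$ leaves an arc $\alpha$ from $x$ to $y$; symmetrically $\Link(b)$ gives an arc $\beta$ from $x$ to $y$, and $\alpha\cap\beta = V(\Link(a)\cap\Link(b))$. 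Thus $\alpha$ and $\beta$ glue to a single cycle precisely when they are internally disjoint, i.e.\ when $\Link(a)\cap\Link(b)$ has no vertices beyond $x,y$; and one further checks that even when the vertex sets agree, an \emph{edge} $xy$ lying in both $\Link(a)$ and $\Link(b)$ (the degenerate case $F = \partial\Delta^3$) makes $\alpha\cup\beta$ an arc rather than a cycle. Hence $\Link_{\varphi(F)}(p)$ is a $1$-sphere if and only if $\Link(a)\cap\Link(b) = \Link(ab)$, while every other vertex link is unchanged. This already yields the ``only if'' direction: if the link condition fails then $\varphi(F)$ is not a closed $2$-manifold (its link at $p$ is disconnected, or an arc, or has a vertex of degree at least three), so the topological type is not preserved.

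For the ``if'' direction I still have to upgrade ``$\varphi(F)$ is a closed $2$-manifold'' to ``$\varphi(F)$ has the same topological type as $F$''. Under the link condition, I would show $\varphi$ induces exactly the quotient map $q\colon |F|\to |F|/|ab|$ that crushes the arc $|ab|$ to $p$ and makes no further identifications: the only simplices of $F$ that $\varphi$ merges are pairs $a\tau, b\tau$ with $a\tau,b\tau\in F$, and the link condition forces the bridging simplex $ab\tau$ to lie in $F$ as well, so that ``merging $a\tau$ with $b\tau$'' is literally the restriction of $q$ to the closed simplex $|ab\tau|$. Since the sphere structure of the vertex links of $a$ and $b$ gives $|ab|$ a neighbourhood in $|F|$ homeomorphic to a disk, and collapsing a properly embedded arc in a disk to a point is a homeomorphism, $q$ is a homeomorphism, whence $|\varphi(F)| \cong |F|$.

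I expect the main obstacle to be exactly this last step: verifying, with careful simplex-by-simplex bookkeeping, that the link condition rules out \emph{every} identification other than the arc-collapse, and then packaging the local ``collapse an arc in a disk'' move into a global homeomorphism $|F|\to|\varphi(F)|$. The ``only if'' direction is comparatively routine once one commits to the link-of-$p$ computation, the only real care being the enumeration of the degenerate ways in which $\alpha\cup\beta$ can fail to be a cycle (an extra shared vertex; an extra shared edge, i.e.\ $F=\partial\Delta^3$).
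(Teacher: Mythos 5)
First, note that the paper does not prove this lemma at all: it is quoted verbatim, with proof deferred to Edelsbrunner's \emph{Geometry and Topology for Mesh Generation}, so there is no in-paper argument to compare against. Your outline follows the standard route, and the ``only if'' half is essentially sound: computing $\Link_{\varphi(F)}(p)$ as the union of the two arcs $\alpha$ (the cycle $\Link(a)$ with $b$ deleted) and $\beta$ (the cycle $\Link(b)$ with $a$ deleted), observing that both run between the two vertices of $\Link(ab)$, and checking that $\alpha\cup\beta$ is a circle exactly when the link condition holds (with the $\partial\Delta^3$ degeneracy handled separately) is correct, and a non-circle link at $p$ does rule out $\varphi(F)$ being a closed surface. (Two small caveats: ``every other vertex link is unchanged'' is not quite right --- a vertex $u\in\Link(a)\cap\Link(b)$ has $a$ and $b$ identified inside its link, which is only harmless when $abu\in F$; and your standing assumption that $F$ is closed is stronger than the lemma's hypothesis and, notably, fails for the paper's own worked example, which is a disk.)

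The genuine gap is in the ``if'' direction, where you assert that $\varphi$ ``induces exactly the quotient map $q\colon|F|\to|F|/|ab|$ that crushes the arc $|ab|$ to $p$ and makes no further identifications,'' and that merging $a\tau$ with $b\tau$ is ``literally the restriction of $q$ to $|ab\tau|$.'' This is false. Crushing only the segment $|ab|$ inside the triangle $|abx|$ leaves the edges $|ax|$ and $|bx|$ as two \emph{distinct} arcs from $p$ to $x$ (the image is a bigon), whereas $\varphi$ identifies $ax$ with $bx$ as a single edge $px$ and demotes the $2$-simplex $abx$ to the $1$-simplex $px$. The point-set map actually induced by $\varphi$ collapses the entire quadrilateral $|abx|\cup|aby|$ onto the arc $x$--$p$--$y$, with one-dimensional fibers foliating both triangles; it is not the arc-crushing quotient, and so the ``collapse an arc in a disk'' argument does not apply to it. The standard repair is to abandon the quotient-map framing: under the link condition, $N=|\Closure\circ\Star(a)\cup\Closure\circ\Star(b)|$ is a closed disk with boundary circle $|\alpha\cup\beta|$, the closed star of $p$ in $\varphi(F)$ is the cone on that same circle (hence also a disk with the same boundary), and $F$ and $\varphi(F)$ agree identically outside these disks; any boundary-fixing homeomorphism between the two disks (Alexander's lemma) then glues with the identity to give $|F|\cong|\varphi(F)|$. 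The combinatorial work you have already done --- showing $\Link_{\varphi(F)}(p)=\alpha\cup\beta$ is a circle and that the link condition prevents $N$ from being pinched --- is precisely what this disk-replacement argument needs, so the fix is a rerouting rather than new computation.
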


  \par Revisiting the example from Fig.~\ref{fig:decexample}, we can construct the topology and check the Link Condition using operations supported by the \asc library.
\begin{lstlisting}
// Construct the topology
SurfaceMesh mesh;
mesh.insert({0,1,3});
mesh.insert({0,3,5});
mesh.insert({1,3,4});
mesh.insert({3,4,5});
mesh.insert({1,2,4});
mesh.insert({2,4,5});

// Get simplices and edge
SimplexSet<SurfaceMesh> A,B,AB, AcapB;
auto ab = mesh.get_simplex_up({3,4});
auto a = mesh.get_simplex_up({3});
auto b = mesh.get_simplex_up({4});

// Compute the links of each
getLink(mesh, a, A);
getLink(mesh, b, B);
getLink(mesh, ab, AB);

// Link(a): Simplices {1}, {4}, {5}, {1,4}, {0,1}, {0,5}, {4,5}
std::cout << A << std::endl;
// Link(b): Simplices {5}, {2}, {3}, {1}, {3,5}, {1,3}, {1,2}, {2,5}
std::cout << B << std::endl;
// Link(AB): Simplices {1}, {5}
std::cout << AB << std::endl;

set_intersection(A, B, AcapB);
// Link(a) \cap Link(b): Simplices {1}, {5}
std::cout << AcapB << std::endl;

// Check the Link Condition
std::cout << (AB == AcapB) << std::endl; // Evaluates to true
\end{lstlisting}

  The \verb|getLink()| function utilizes a series of visitor breadth first traversals down then up the complex, collecting the set of simplex into a \verb|SimplexSet| object.
  Then using set operations provided by \verb|SimplexSet| the set difference and equality comparison are performed.
  This example highlights the simplicity, clarity, and transparency of using the \asc library.

\section{Conclusions}

  \par \asc provides a general simplicial complex data structure which allows the storage of user defined types at each simplex level.
  The library comes with a full-featured API providing common simplicial complex operations, as well as support for complex traversals using a visitor.
  We also provide a metadata aware decimation algorithm which allows users to collapse simplices of any dimension while preserving data according to a user defined mapping function.
  Our implementation of \asc using a strongly-typed language is only possible due to recent innovations in language tools.
  The \asc API abstracts away most of the complicated templating, allowing it to be both modern and easy to use.
  We anticipate that \asc will not only be of use for the ET community but microscopy and modeling as a whole along with other fields like applied mathematics and CAD.

  \par One limitation is the ease of extending to other languages.
  The generality of \asc is reliant upon the C++ compiler.
  Sacrificing this, specific realizations of \asc can be wrapped using tools like SWIG for use in other languages.
  Another limitation of \asc is the memory efficiency.
  The current Hasse diagram based implementation was selected for the sake of transparency, ease of traversal, and manipulation.
  Optimizations to the memory efficiency of \asc can be made by employing more compact representations.
  The variadic template approach we use to attach user data to simplices is compatible with data structures which explicitly represent all simplices but only a subset of topological relations.
  This includes data structures such as SimplexTree\cite{Boissonnat2012}, IS\cite{DeFloriani2010b}, and SIG\cite{DeFloriani2004} among others.
  Other compressed data structures which skip levels of low importance using implicit nodes are not compatible with the current \asc implementation.
  Skipped levels would need to be implemented as exceptions to the combinatorial variadic rules.
  Similarly, although \asc in its current form is restricted to the representation of simplicial complexes, the combinatorial strategy can be easily adapted to support other regular polytopes by changing the boundary relation storage rules.
  In the future we hope to incorporate parallelism into the \asc library.
  A copy of \asc along with online documentation can found on GitHub \url{https://github.com/ctlee/casc}.

\begin{printonly}
  \appendix
  \section{Supplementary Materials}
  Please see the supplementary materials in the online version.
\end{printonly}

% Acknowledgments
\begin{acks}
The authors would like to thank the anonymous referees for
their valuable comments and helpful suggestions.
This work is supported by the \grantsponsor{100000057}{National Institutes of Health, NIGMS}{http://dx.doi.org/10.13039/100000057} under grant numbers \grantnum[]{100000057}{P41-GM103426} and \grantnum[]{100000057}{RO1-GM31749}.
CTL also acknowledges support from the NIH Molecular Biophysics Training Grant \grantnum[]{100000057}{T32-GM008326}.
MJH was supported in part by the \grantsponsor{NSFDMS}{National Science Foundation, Division of Mathematical Sciences}{http://dx.doi.org/10.13039/100000121} under awards \grantnum[]{NSFDMS}{DMS-CM1620366} and \grantnum[]{NSFDMS}{DMS-FRG1262982}.
\end{acks}

% Bibliography
% \bibliographystyle{ACM-Reference-Format}
% \bibliography{minimal}

\begin{thebibliography}{24}

%%% ====================================================================
%%% NOTE TO THE USER: you can override these defaults by providing
%%% customized versions of any of these macros before the \bibliography
%%% command.  Each of them MUST provide its own final punctuation,
%%% except for \shownote{}, \showDOI{}, and \showURL{}.  The latter two
%%% do not use final punctuation, in order to avoid confusing it with
%%% the Web address.
%%%
%%% To suppress output of a particular field, define its macro to expand
%%% to an empty string, or better, \unskip, like this:
%%%
%%% \newcommand{\showDOI}[1]{\unskip}   % LaTeX syntax
%%%
%%% \def \showDOI #1{\unskip}           % plain TeX syntax
%%%
%%% ====================================================================

\ifx \showCODEN    \undefined \def \showCODEN     #1{\unskip}     \fi
\ifx \showDOI      \undefined \def \showDOI       #1{#1}\fi
\ifx \showISBNx    \undefined \def \showISBNx     #1{\unskip}     \fi
\ifx \showISBNxiii \undefined \def \showISBNxiii  #1{\unskip}     \fi
\ifx \showISSN     \undefined \def \showISSN      #1{\unskip}     \fi
\ifx \showLCCN     \undefined \def \showLCCN      #1{\unskip}     \fi
\ifx \shownote     \undefined \def \shownote      #1{#1}          \fi
\ifx \showarticletitle \undefined \def \showarticletitle #1{#1}   \fi
\ifx \showURL      \undefined \def \showURL       {\relax}        \fi
% The following commands are used for tagged output and should be
% invisible to TeX
\providecommand\bibfield[2]{#2}
\providecommand\bibinfo[2]{#2}
\providecommand\natexlab[1]{#1}
\providecommand\showeprint[2][]{arXiv:#2}

\bibitem[\protect\citeauthoryear{??}{CGA}{[n. d.]}]%
        {CGAL}
 \bibinfo{year}{[n. d.]}\natexlab{}.
\newblock \bibinfo{title}{{CGAL, Computational Geometry Algorithms Library}}.
\newblock
\newblock
\urldef\tempurl%
\url{http://www.cgal.org}
\showURL{%
\tempurl}


\bibitem[\protect\citeauthoryear{Babu{\v{s}}ka and Aziz}{Babu{\v{s}}ka and
  Aziz}{1976}]%
        {Babuska1976}
\bibfield{author}{\bibinfo{person}{I. Babu{\v{s}}ka} {and}
  \bibinfo{person}{A.~K. Aziz}.} \bibinfo{year}{1976}\natexlab{}.
\newblock \showarticletitle{{On the Angle Condition in the Finite Element
  Method}}.
\newblock \bibinfo{journal}{\emph{SIAM J. Numer. Anal.}} \bibinfo{volume}{13},
  \bibinfo{number}{2} (\bibinfo{date}{apr} \bibinfo{year}{1976}),
  \bibinfo{pages}{214--226}.
\newblock
\showISSN{0036-1429}
\urldef\tempurl%
\url{https://doi.org/10.1137/0713021}
\showDOI{\tempurl}


\bibitem[\protect\citeauthoryear{Bank, Sherman, and Weiser}{Bank
  et~al\mbox{.}}{1983}]%
        {Bank1983}
\bibfield{author}{\bibinfo{person}{Randolph~E. Bank},
  \bibinfo{person}{Andrew~H. Sherman}, {and} \bibinfo{person}{Alan Weiser}.}
  \bibinfo{year}{1983}\natexlab{}.
\newblock \showarticletitle{{Refinement Algorithms and Data Structures for
  Regular Local Mesh Refinement}}.
\newblock In \bibinfo{booktitle}{\emph{Sci. Comput. Appl. Methematics Comput.
  to Phys. Sci.}}, \bibfield{editor}{\bibinfo{person}{R.~S. Stepleman}} (Ed.).
  \bibinfo{publisher}{North-Holland}, \bibinfo{pages}{3--17}.
\newblock


\bibitem[\protect\citeauthoryear{Bank and Xu}{Bank and Xu}{1996}]%
        {Bank1996}
\bibfield{author}{\bibinfo{person}{Randolph~E. Bank} {and}
  \bibinfo{person}{Jinchao Xu}.} \bibinfo{year}{1996}\natexlab{}.
\newblock \showarticletitle{{An Algorithm for Coarsening Unstructured Meshes}}.
\newblock \bibinfo{journal}{\emph{Numer. Math.}} \bibinfo{volume}{73},
  \bibinfo{number}{1} (\bibinfo{date}{mar} \bibinfo{year}{1996}),
  \bibinfo{pages}{1--36}.
\newblock
\showISSN{0029-599X}
\urldef\tempurl%
\url{https://doi.org/10.1007/s002110050181}
\showDOI{\tempurl}


\bibitem[\protect\citeauthoryear{Boissonnat, Devillers, and Hornus}{Boissonnat
  et~al\mbox{.}}{2009}]%
        {boissonnat:inria-00412437}
\bibfield{author}{\bibinfo{person}{Jean-Daniel Boissonnat},
  \bibinfo{person}{Olivier Devillers}, {and} \bibinfo{person}{Samuel Hornus}.}
  \bibinfo{year}{2009}\natexlab{}.
\newblock \showarticletitle{{Incremental Construction of the Delaunay Graph in
  Medium Dimension}}. In \bibinfo{booktitle}{\emph{Annu. Symp. Comput. Geom.}}
  \bibinfo{address}{Aarhus, Denmark}, \bibinfo{pages}{208--216}.
\newblock
\urldef\tempurl%
\url{https://hal.inria.fr/inria-00412437}
\showURL{%
\tempurl}


\bibitem[\protect\citeauthoryear{Boissonnat and Maria}{Boissonnat and
  Maria}{2014}]%
        {Boissonnat2012}
\bibfield{author}{\bibinfo{person}{Jean-Daniel Boissonnat} {and}
  \bibinfo{person}{Cl{\'{e}}ment Maria}.} \bibinfo{year}{2014}\natexlab{}.
\newblock \showarticletitle{{The Simplex Tree: An Efficient Data Structure for
  General Simplicial Complexes}}.
\newblock \bibinfo{journal}{\emph{Algorithmica}} \bibinfo{volume}{70},
  \bibinfo{number}{3} (\bibinfo{date}{nov} \bibinfo{year}{2014}),
  \bibinfo{pages}{406--427}.
\newblock
\showISSN{0178-4617}
\urldef\tempurl%
\url{https://doi.org/10.1007/s00453-014-9887-3}
\showDOI{\tempurl}


\bibitem[\protect\citeauthoryear{Canino, {De Floriani}, and Weiss}{Canino
  et~al\mbox{.}}{2011}]%
        {Canino2011}
\bibfield{author}{\bibinfo{person}{David Canino}, \bibinfo{person}{Leila {De
  Floriani}}, {and} \bibinfo{person}{Kenneth Weiss}.}
  \bibinfo{year}{2011}\natexlab{}.
\newblock \showarticletitle{{IA*: An Adjacency-Based Representation for
  Non-Manifold Simplicial Shapes in Arbitrary Dimensions}}.
\newblock \bibinfo{journal}{\emph{Comput. Graph.}} \bibinfo{volume}{35},
  \bibinfo{number}{3} (\bibinfo{date}{jun} \bibinfo{year}{2011}),
  \bibinfo{pages}{747--753}.
\newblock
\showISSN{00978493}
\urldef\tempurl%
\url{https://doi.org/10.1016/j.cag.2011.03.009}
\showDOI{\tempurl}


\bibitem[\protect\citeauthoryear{{De Floriani}, Greenfieldboyce, and Hui}{{De
  Floriani} et~al\mbox{.}}{2004}]%
        {DeFloriani2004}
\bibfield{author}{\bibinfo{person}{Leila {De Floriani}}, \bibinfo{person}{David
  Greenfieldboyce}, {and} \bibinfo{person}{Annie Hui}.}
  \bibinfo{year}{2004}\natexlab{}.
\newblock \showarticletitle{{A Data Structure for Non-Manifold Simplicial D
  -Complexes}}. In \bibinfo{booktitle}{\emph{Proc. 2004 Eurographics/ACM
  SIGGRAPH Symp. Geom. Process. - SGP '04}}. \bibinfo{publisher}{ACM Press},
  \bibinfo{address}{New York, New York, USA}, \bibinfo{pages}{83}.
\newblock
\showISBNx{3905673134}
\showISSN{17278384}
\urldef\tempurl%
\url{https://doi.org/10.1145/1057432.1057444}
\showDOI{\tempurl}


\bibitem[\protect\citeauthoryear{{De Floriani} and Hui}{{De Floriani} and
  Hui}{2005}]%
        {DeFloriani2005}
\bibfield{author}{\bibinfo{person}{Leila {De Floriani}} {and}
  \bibinfo{person}{Annie Hui}.} \bibinfo{year}{2005}\natexlab{}.
\newblock \showarticletitle{{Data Structures for Simplicial Complexes: An
  Analysis and a Comparison}}. In \bibinfo{booktitle}{\emph{Proc. Third
  Eurographics Symp. Geom. Process.}} \bibinfo{publisher}{Eurographics
  Association}, \bibinfo{address}{Vienna}, \bibinfo{pages}{119}.
\newblock
\urldef\tempurl%
\url{http://dl.acm.org/citation.cfm?id=1281920.1281940}
\showURL{%
\tempurl}


\bibitem[\protect\citeauthoryear{{De Floriani}, Hui, Panozzo, and Canino}{{De
  Floriani} et~al\mbox{.}}{2010}]%
        {DeFloriani2010b}
\bibfield{author}{\bibinfo{person}{Leila {De Floriani}}, \bibinfo{person}{Annie
  Hui}, \bibinfo{person}{Daniele Panozzo}, {and} \bibinfo{person}{David
  Canino}.} \bibinfo{year}{2010}\natexlab{}.
\newblock \showarticletitle{{A Dimension-Independent Data Structure for
  Simplicial Complexes}}.
\newblock In \bibinfo{booktitle}{\emph{Proc. 19th Int. Meshing Roundtable}}.
  \bibinfo{publisher}{Springer Berlin Heidelberg}, \bibinfo{address}{Berlin,
  Heidelberg}, \bibinfo{pages}{403--420}.
\newblock
\urldef\tempurl%
\url{https://doi.org/10.1007/978-3-642-15414-0_24}
\showDOI{\tempurl}


\bibitem[\protect\citeauthoryear{Dyedov, Ray, Einstein, Jiao, and
  Tautges}{Dyedov et~al\mbox{.}}{2015}]%
        {Dyedov2015}
\bibfield{author}{\bibinfo{person}{Vladimir Dyedov}, \bibinfo{person}{Navamita
  Ray}, \bibinfo{person}{Daniel Einstein}, \bibinfo{person}{Xiangmin Jiao},
  {and} \bibinfo{person}{Timothy~J. Tautges}.} \bibinfo{year}{2015}\natexlab{}.
\newblock \showarticletitle{{AHF: Array-Based Half-Facet Data Structure for
  Mixed-Dimensional and Non-Manifold Meshes}}.
\newblock \bibinfo{journal}{\emph{Eng. Comput.}} \bibinfo{volume}{31},
  \bibinfo{number}{3} (\bibinfo{date}{jul} \bibinfo{year}{2015}),
  \bibinfo{pages}{389--404}.
\newblock
\showISSN{0177-0667}
\urldef\tempurl%
\url{https://doi.org/10.1007/s00366-014-0378-6}
\showDOI{\tempurl}


\bibitem[\protect\citeauthoryear{Edelsbrunner}{Edelsbrunner}{2001}]%
        {Edelsbrunner2001}
\bibfield{author}{\bibinfo{person}{Herbert Edelsbrunner}.}
  \bibinfo{year}{2001}\natexlab{}.
\newblock \bibinfo{booktitle}{\emph{{Geometry and Topology for Mesh
  Generation}}}.
\newblock \bibinfo{publisher}{Cambridge University Press},
  \bibinfo{address}{Cambridge}.
\newblock
\showISBNx{9780511530067}
\urldef\tempurl%
\url{https://doi.org/10.1017/CBO9780511530067}
\showDOI{\tempurl}


\bibitem[\protect\citeauthoryear{Gao, Yu, and Holst}{Gao et~al\mbox{.}}{2013}]%
        {Gao2013}
\bibfield{author}{\bibinfo{person}{Zhanheng Gao}, \bibinfo{person}{Zeyun Yu},
  {and} \bibinfo{person}{Michael Holst}.} \bibinfo{year}{2013}\natexlab{}.
\newblock \showarticletitle{{Feature-Preserving Surface Mesh Smoothing Via
  Suboptimal Delaunay Triangulation}}.
\newblock \bibinfo{journal}{\emph{Graph. Models}} \bibinfo{volume}{75},
  \bibinfo{number}{1} (\bibinfo{date}{jan} \bibinfo{year}{2013}),
  \bibinfo{pages}{23--38}.
\newblock
\showISSN{15240703}
\urldef\tempurl%
\url{https://doi.org/10.1016/j.gmod.2012.10.007}
\showDOI{\tempurl}


\bibitem[\protect\citeauthoryear{Holst, Sarbach, Tiglio, and Vallisneri}{Holst
  et~al\mbox{.}}{2016}]%
        {HSTV16a}
\bibfield{author}{\bibinfo{person}{M. Holst}, \bibinfo{person}{O. Sarbach},
  \bibinfo{person}{M. Tiglio}, {and} \bibinfo{person}{M. Vallisneri}.}
  \bibinfo{year}{2016}\natexlab{}.
\newblock \showarticletitle{The Emergence of Gravitational Wave Science: 100
  Years of Development of Mathematical Theory, Detectors, Numerical Algorithms,
  and Data Analysis Tools}.
\newblock \bibinfo{journal}{\emph{Bull. Amer. Math. Soc.}}
  \bibinfo{volume}{53} (\bibinfo{year}{2016}), \bibinfo{pages}{513--554}.
\newblock
\urldef\tempurl%
\url{http://dx.doi.org/10.1090/bull/1544}
\showURL{%
\tempurl}


\bibitem[\protect\citeauthoryear{Holst and Tiee}{Holst and Tiee}{2018}]%
        {HoTi14a}
\bibfield{author}{\bibinfo{person}{M. Holst} {and} \bibinfo{person}{C. Tiee}.}
  \bibinfo{year}{2018}\natexlab{}.
\newblock \showarticletitle{Finite Element Exterior Calculus for Parabolic
  Evolution Problems on {Riemannian} Hypersurfaces}.
\newblock \bibinfo{journal}{\emph{Journal of Computational Mathematics}}
  \bibinfo{volume}{36}, \bibinfo{number}{6} (\bibinfo{year}{2018}),
  \bibinfo{pages}{792--832}.
\newblock


\bibitem[\protect\citeauthoryear{Liu and Joe}{Liu and Joe}{1994}]%
        {Liu1994}
\bibfield{author}{\bibinfo{person}{A. Liu} {and} \bibinfo{person}{B. Joe}.}
  \bibinfo{year}{1994}\natexlab{}.
\newblock \showarticletitle{{Relationship Between Tetrahedron Shape Measures}}.
\newblock \bibinfo{journal}{\emph{BIT}} \bibinfo{volume}{34},
  \bibinfo{number}{2} (\bibinfo{date}{jun} \bibinfo{year}{1994}),
  \bibinfo{pages}{268--287}.
\newblock
\showISSN{0006-3835}
\urldef\tempurl%
\url{https://doi.org/10.1007/BF01955874}
\showDOI{\tempurl}


\bibitem[\protect\citeauthoryear{Ray, Grindeanu, Zhao, Mahadevan, and Jiao}{Ray
  et~al\mbox{.}}{2015}]%
        {Ray2015}
\bibfield{author}{\bibinfo{person}{Navamita Ray}, \bibinfo{person}{Iulian
  Grindeanu}, \bibinfo{person}{Xinglin Zhao}, \bibinfo{person}{Vijay
  Mahadevan}, {and} \bibinfo{person}{Xiangmin Jiao}.}
  \bibinfo{year}{2015}\natexlab{}.
\newblock \showarticletitle{{Array-Based Hierarchical Mesh Generation in
  Parallel}}.
\newblock \bibinfo{journal}{\emph{Procedia Eng.}}  \bibinfo{volume}{124}
  (\bibinfo{year}{2015}), \bibinfo{pages}{291--303}.
\newblock
\showISSN{18777058}
\urldef\tempurl%
\url{https://doi.org/10.1016/j.proeng.2015.10.140}
\showDOI{\tempurl}


\bibitem[\protect\citeauthoryear{Regge}{Regge}{1961}]%
        {Regge1961}
\bibfield{author}{\bibinfo{person}{T. Regge}.} \bibinfo{year}{1961}\natexlab{}.
\newblock \showarticletitle{{General Relativity Without Coordinates}}.
\newblock \bibinfo{journal}{\emph{Nuovo Cim.}} \bibinfo{volume}{19},
  \bibinfo{number}{3} (\bibinfo{date}{feb} \bibinfo{year}{1961}),
  \bibinfo{pages}{558--571}.
\newblock
\showISSN{0029-6341}
\urldef\tempurl%
\url{https://doi.org/10.1007/BF02733251}
\showDOI{\tempurl}


\bibitem[\protect\citeauthoryear{Roberts}{Roberts}{2014}]%
        {Roberts2014}
\bibfield{author}{\bibinfo{person}{Elijah Roberts}.}
  \bibinfo{year}{2014}\natexlab{}.
\newblock \showarticletitle{{Cellular and Molecular Structure As a Unifying
  Framework for Whole-Cell Modeling}}.
\newblock \bibinfo{journal}{\emph{Curr. Opin. Struct. Biol.}}
  \bibinfo{volume}{25} (\bibinfo{date}{apr} \bibinfo{year}{2014}),
  \bibinfo{pages}{86--91}.
\newblock
\showISSN{0959440X}
\urldef\tempurl%
\url{https://doi.org/10.1016/j.sbi.2014.01.005}
\showDOI{\tempurl}


\bibitem[\protect\citeauthoryear{Si}{Si}{2015}]%
        {Si2015a}
\bibfield{author}{\bibinfo{person}{Hang Si}.} \bibinfo{year}{2015}\natexlab{}.
\newblock \showarticletitle{{TetGen, a Delaunay-Based Quality Tetrahedral Mesh
  Generator}}.
\newblock \bibinfo{journal}{\emph{ACM Trans. Math. Softw.}}
  \bibinfo{volume}{41}, \bibinfo{number}{2} (\bibinfo{date}{feb}
  \bibinfo{year}{2015}), \bibinfo{pages}{1--36}.
\newblock
\showISSN{00983500}
\urldef\tempurl%
\url{https://doi.org/10.1145/2629697}
\showDOI{\tempurl}


\bibitem[\protect\citeauthoryear{Vanden-Eijnden and Venturoli}{Vanden-Eijnden
  and Venturoli}{2009}]%
        {Vanden-Eijnden2009d}
\bibfield{author}{\bibinfo{person}{Eric Vanden-Eijnden} {and}
  \bibinfo{person}{Maddalena Venturoli}.} \bibinfo{year}{2009}\natexlab{}.
\newblock \showarticletitle{{Markovian Milestoning with Voronoi
  Tessellations}}.
\newblock \bibinfo{journal}{\emph{J. Chem. Phys.}} \bibinfo{volume}{130},
  \bibinfo{number}{19} (\bibinfo{date}{may} \bibinfo{year}{2009}),
  \bibinfo{pages}{194101}.
\newblock
\showISSN{0021-9606}
\urldef\tempurl%
\url{https://doi.org/10.1063/1.3129843}
\showDOI{\tempurl}


\bibitem[\protect\citeauthoryear{Yu, Holst, and {Andrew McCammon}}{Yu
  et~al\mbox{.}}{2008a}]%
        {Yu2008}
\bibfield{author}{\bibinfo{person}{Zeyun Yu}, \bibinfo{person}{Michael~J.
  Holst}, {and} \bibinfo{person}{J. {Andrew McCammon}}.}
  \bibinfo{year}{2008}\natexlab{a}.
\newblock \showarticletitle{{High-Fidelity Geometric Modeling for Biomedical
  Applications}}.
\newblock \bibinfo{journal}{\emph{Finite Elem. Anal. Des.}}
  \bibinfo{volume}{44}, \bibinfo{number}{11} (\bibinfo{date}{jul}
  \bibinfo{year}{2008}), \bibinfo{pages}{715--723}.
\newblock
\showISSN{0168874X}
\urldef\tempurl%
\url{https://doi.org/10.1016/j.finel.2008.03.004}
\showDOI{\tempurl}


\bibitem[\protect\citeauthoryear{Yu, Holst, Hayashi, Bajaj, Ellisman, McCammon,
  and Hoshijima}{Yu et~al\mbox{.}}{2008b}]%
        {Yu2008a}
\bibfield{author}{\bibinfo{person}{Zeyun Yu}, \bibinfo{person}{Michael~J
  Holst}, \bibinfo{person}{Takeharu Hayashi}, \bibinfo{person}{Chandrajit~L
  Bajaj}, \bibinfo{person}{Mark~H Ellisman}, \bibinfo{person}{J~Andrew
  McCammon}, {and} \bibinfo{person}{Masahiko Hoshijima}.}
  \bibinfo{year}{2008}\natexlab{b}.
\newblock \showarticletitle{{Three-Dimensional Geometric Modeling of
  Membrane-Bound Organelles in Ventricular Myocytes: Bridging the Gap Between
  Microscopic Imaging and Mathematical Simulation}}.
\newblock \bibinfo{journal}{\emph{J. Struct. Biol.}} \bibinfo{volume}{164},
  \bibinfo{number}{3} (\bibinfo{date}{dec} \bibinfo{year}{2008}),
  \bibinfo{pages}{304--313}.
\newblock
\showISSN{10478477}
\urldef\tempurl%
\url{https://doi.org/10.1016/j.jsb.2008.09.004}
\showDOI{\tempurl}


\bibitem[\protect\citeauthoryear{Zhang}{Zhang}{2016}]%
        {Zhang2016}
\bibfield{author}{\bibinfo{person}{Yongjie Zhang}.}
  \bibinfo{year}{2016}\natexlab{}.
\newblock \bibinfo{booktitle}{\emph{{Geometric Modeling and Mesh Generation
  from Scanned Images}}}.
\newblock \bibinfo{publisher}{Chapman and Hall/CRC}.
\newblock
\showISBNx{978-1-4822-2776-5}
\urldef\tempurl%
\url{https://doi.org/10.1201/b19466}
\showDOI{\tempurl}


\end{thebibliography}
%%% -*-BibTeX-*-
%%% Do NOT edit. File created by BibTeX with style
%%% ACM-Reference-Format-Journals [18-Jan-2012].

\begin{screenonly}
\appendix
\section{Supplementary Materials}
  \subsection{Derivation of Worse Case Insertion Performance}
\par The number of combinations at each level $l$ for a $k$-simplex is given by the binomial theorem, $\binom{k}{l}$.
In the worst case, where the simplicial complex is empty, the total number of nodes created for inserting a $k$-simplex is $\sum_{l=0}^{k}\binom{k}{l} = 2^k$.
The number of edges to represent all topological relations is then given by $\sum_{l=0}^{k}l\binom{k}{l} = k\cdot 2^{k-1}$.

\subsection{Templated Insertion Algorithm}

\par Pseudocode for the algorithms presented in this manuscript have been vastly simplified in order to facilitate understanding.
For example Algorithm~\ref{alg:insertion}, while the non-templated version is appears straightforward, it is impossible to be implemented in C++ directly, due to several typing related issues.
First, the function prototype for \verb|insert()| requires the \verb|rootSimplex| as the second argument.
Simplices at different levels have different types and \verb|insert()| must be overloaded.
Similarly the variable \verb|newSimplex| and function \verb|createSimplex()| must know the type of simplex which will be created at compile time.

\par The actual implementation uses variadic templates to resolve the typing issues.
As an example, templated pseudocode for simplex insertion (Algorithm~\ref{alg:insertion}) is shown in Algorithm~\ref{alg:templatedinsertion}.
Not only does the templated code automatically build the correct overloaded functions, but it provides many optimizations.

\par The step-by-step insertion of tetrahedron \{1,2,3,4\} is shown in Figure~\ref{fig:insertorder}.
Numbered red lines correspond to \verb|newNode| and \verb|root| in function \verb|insertNode()|.
Skinny black lines are the topological relations inserted by \verb|backfill()|.

%%%% INSERTION ALGORITHM PSEUDOCODE
\newpage

\begin{algorithm}[ht!]
\KwIn{$keys$[$n$]: Indices of $n$ simplices to describe new simplex $s$,\\
$\mc{F}$: simplicial complex}
\KwOut{The new simplex $s$}

\SetStartEndCondition{ (}{)}{)}\SetAlgoBlockMarkers{}{\}}
\SetKwFor{For}{for}{\{}{}
\SetKwIF{If}{ElseIf}{Else}{if}{\{}{elif}{else\{}{}
\SetKw{KwInFor}{in}
\AlgoDisplayBlockMarkers\SetAlgoNoLine\DontPrintSemicolon

\makeatletter
%%% CTL Redefinition of \SetKwFunction to emulate C++ template functions
\newcommand{\TSetKwFunction}[2]{%
%%% use of gdef since newcommand doesn't manage to define the macro when SetKwFunction is used in \algocf@caption@proc
  \expandafter\gdef\csname @#1\endcsname##1##2{
    \FuncSty{#2}\FuncSty{<}\FuncArgSty{##1}\FuncSty{>}\FuncSty{(}\FuncArgSty{##2}\FuncSty{)}
  }%
  \expandafter\gdef\csname#1\endcsname{
    \@ifnextchar\bgroup{\csname @#1\endcsname}{\FuncSty{#2}\xspace}
  }%
}%
\makeatother

\SetKwProg{Fn}{Function }{\{}{}
\TSetKwFunction{iFull}{setupForLoop}
\TSetKwFunction{iFor}{forLoop}
\TSetKwFunction{iRaw}{insertNode}
\TSetKwFunction{backfill}{backfill}
\TSetKwFunction{insert}{insert}
\TSetKwFunction{create}{createSimplex}
\medskip
\tcc*[l]{User function to insert simplex \{keys\}}
\Fn(){\insert{n}{keys[n]}}{
  \Return \iFull{0, n}{root, keys} \tcc*[h]{'$root$' is the root node}
}
\medskip
\tcp{The following are private library functions...}

\tcc{\textbf{Array slice operation.} Algorithm~\ref{alg:insertion}: keys[0:i]}
\Fn(\tcc*[h]{General template}){\iFull{level, n}{root, keys}}{
  \Return \iFor{level, n, n}{root, keys} \tcc*[h]{Setup the recursive for loop}\;
}
\Fn(\tcc*[h]{Terminal condition $n=0$}){\iFull{level, 0}{root, keys}}{
  \Return root\;
}
\medskip
\tcc{\textbf{Templated for loop.} Algorithm~\ref{alg:insertion}: for (i = 0; i < n; i++)}
\Fn(\tcc*[h]{For loop for antistep}){\iFor{level, antistep, n}{root, keys}}{
  \tcc*[l]{$n-antistep$ defines next key to add to $root$}
  \iRaw{level, n-antistep}{root, keys}\;
  \Return \iFor{level, antistep-1, n}{root, keys}\;
}
\Fn(\tcc*[h]{Stop when $antistep$ = 1}){\iFor{level, 1, n}{root, keys}}{
  \Return \iRaw{level, n-1}{root, keys}\;
}
\medskip
\Fn(\tcc*[h]{Insert a new node}){\iRaw{level, n}{root, keys}}{
  v = keys[n]\;
  \eIf(){$root\cup v \in \mc{F}$}{
    newNode= root.up[v]\;
  }(\tcc*[h]{Add simplex $root\cup v$}){
    newNode = \create{n}{} \tcc*[h]{Create a new node, \simplex{n} $newNode$}\;
    newNode.down[v] = root \tcc*[h]{Connect boundary relation}\;
    root.up[v] = newNode \tcc*[h]{Connect coboundary relation}\;

    \backfill(root, newNode, v)\tcc*[h]{Backfill other topological relations}\;
  }
  \tcc{\textbf{Recurse to insert any cofaces of newNode.} Algorithm~\ref{alg:insertion}: insert(keys[0:i], newSimplex)}
  \Return \iFull{level+1, n}{newNode, keys}
}
\medskip
\Fn(\tcc*[h]{Backfilling pointers to other parents}){\backfill{level}{root, newNode, value}}{
  \For{currentNode \KwInFor root.down}{
    childNode = currentNode.up[value] \tcc*[h]{Get simplex $currentNode \cup value$}\;

    newNode.down[value] = child \tcc*[h]{Connect boundary relation}\;
    child.up[value] = newNode \tcc*[h]{Connect coboundary relation}\;
  }
}
\caption{Templated pseudocode implementation of Algorithm~\ref{alg:insertion}.}
\label{alg:templatedinsertion}
\end{algorithm}
%%%%%%%%%%%%%%%%%%%%%%%%%%%%

\newpage
\begin{figure}[H]
  \centering
  \includegraphics[width=\textwidth]{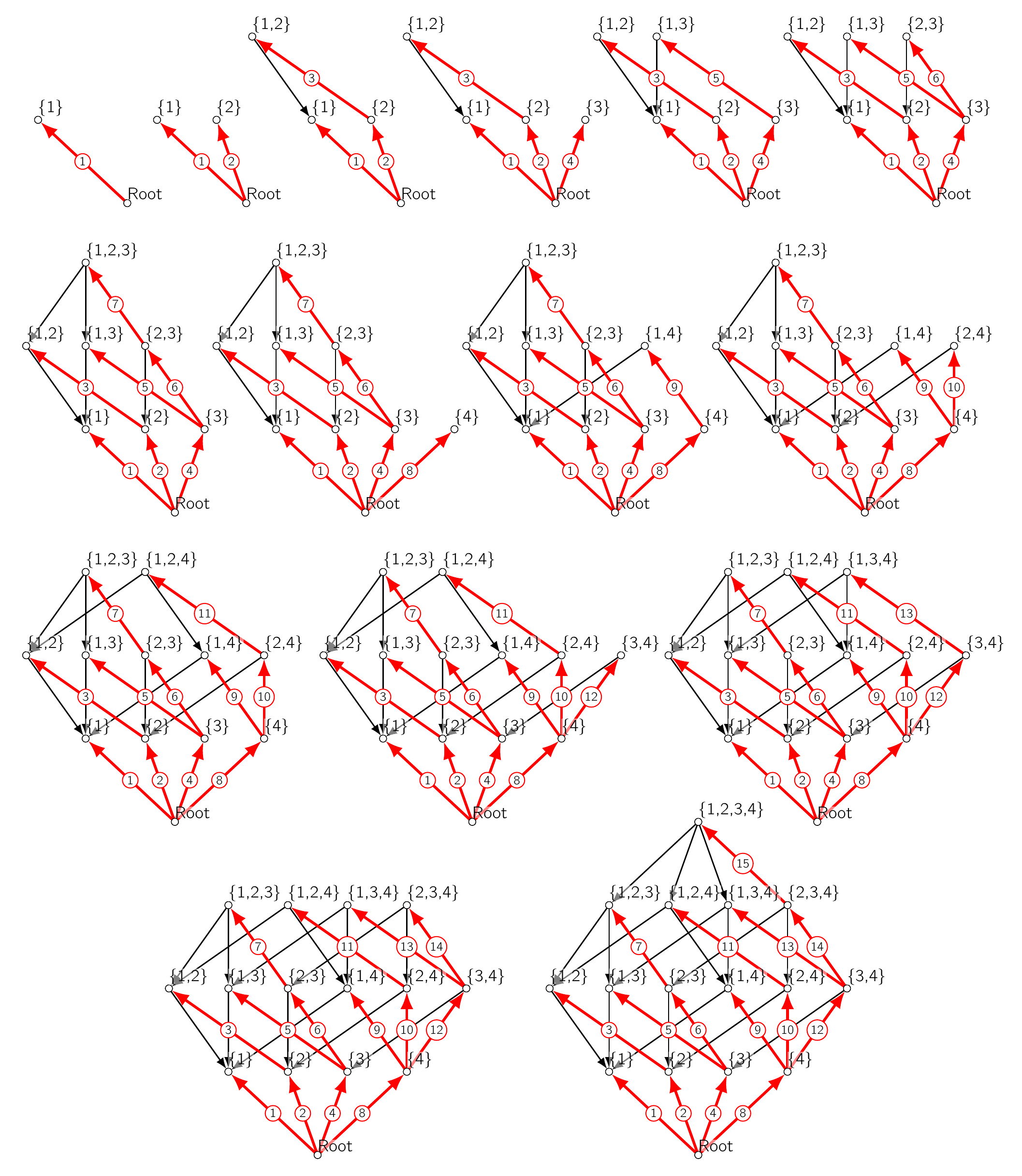}
  \caption{
    The Hasse diagrams for the step-by-step insertion of tetrahedron \{1,2,3,4\} by Algorithm~\ref{alg:insertion}. Red lines represent the order of creation for each simplex. The skinny black lines represent where connections to parent simplices are backfilled.
  }
  \label{fig:insertorder}
\end{figure}

\newpage
\subsection{Code for Getting Neighbors by Adjacency}\label{sec:actualneighbor}
\par The C++ code for collecting the set of \simplices{k} sharing a common coface with simplex $s$.
A function call to \verb|neighbors_up()| calls the following code which serves primarily to help the compiler deduce the dimension, $k$, of $s$.
\begin{lstlisting}
template <class Complex, class SimplexID, class InsertIter>
void neighbors_up(Complex &F, SimplexID s, InsertIter iter)
{
    neighbors_up<Complex, SimplexID::level, InsertIter>(F, s, iter);
}
\end{lstlisting}
With the simplex dimension determined, we call an overloaded function which defines the operation for a \simplex{k}.
\begin{lstlisting}
template <class Complex, std::size_t level, class InsertIter>
void neighbors_up(
        Complex &F,
        typename Complex::template SimplexID<level> s,
        InsertIter iter)
{
    for (auto a : F.get_cover(s))
    {
        auto id = F.get_simplex_up(s, a);
        for (auto b : F.get_name(id))
        {
            auto nbor = F.get_simplex_down(id, b);
            if (nbor != s)
            {
                *iter++ = nbor;
            }
        }
    }
}
\end{lstlisting}
Neighbors of $s$ are pushed into an insert iterator provided by the user.
In this fashion, depending upon the container type the iterator corresponds to, the user can specify whether or not duplicate simplices are returned (\verb|std::vector|) or not (\verb|std::set|).
\end{screenonly}
\end{document}